\documentclass[aip,jmp,reprint,onecolumn,notitlepage,nofootinbib]{revtex4-1}
\usepackage{hyperref}
\usepackage{amsmath}
\usepackage{amssymb}
\usepackage{fullpage}
\usepackage{bbold}
\usepackage{amsthm}
\usepackage{graphicx}
\usepackage{cleveref}
\usepackage{dsfont}
\usepackage{xcolor}
\usepackage{tikz}

\DeclareMathOperator{\tr}{Tr}

\newcommand{\id}{\mathbb{1}}
\newcommand{\inp}[2]{\langle#1,#2\rangle}

\newcommand{\mc}[1]{\mathcal{#1}}
\newcommand{\bs}[1]{\boldsymbol{#1}}
\newcommand{\bc}[1]{\boldsymbol{\mathcal{#1}}}
\newcommand{\md}[1]{\mathds{#1}}

\newcommand{\ct}{^\dagger}
\newcommand{\tn}[1]{^{\otimes #1}}

\newcommand{\C}{\mc{C}}

\newtheorem{theorem}{Theorem}
\newtheorem{definition}{Definition}
\newtheorem{lemma}{Lemma}

\newcounter{notecounter}

\DeclareMathOperator{\vsp}{span}
\usepackage{appendix}

\newcommand{\Vasc}{V_{[\mathrm{A}]}}
\newcommand{\Vasa}{V_{\{\mathrm{A}\}}}
\newcommand{\Vsc}{V_{[\mathrm{S}]}}
\newcommand{\Vsa}{V_{\{\mathrm{S}\}}}
\newcommand{\Vadjc}{V_{[\mathrm{adj}]}}
\newcommand{\Vadja}{V_{\{\mathrm{adj}\}}}
\newcommand{\Vadjap}{V^\perp_{\{\mathrm{adj}\}}}

\newcommand{\bsq}{\bs{\sigma}_q}
\newcommand{\bsqh}{\bs{\hat{\sigma}}_q}
\usepackage{natbib}
\bibliographystyle{unsrtnat}
\usepackage{epstopdf}
\usepackage{tikz-cd}
\usepackage{array}
\usepackage{moresize}
\newcolumntype{C}{>{$\displaystyle} c <{$}}

\setlength{\bibsep}{10pt}

\begin{document}
\title{Representations of the multi-qubit Clifford group}
\author{Jonas Helsen}
\affiliation{QuTech, Delft University of Technology, Lorentzweg 1, 2628 CJ Delft, The Netherlands}
\author{Joel J. Wallman}
\affiliation{Institute for Quantum Computing, University of Waterloo, Waterloo, Ontario N2L 3G1, Canada}
\affiliation{Department of Applied Mathematics, University of Waterloo, Waterloo, Ontario N2L 3G1, Canada}
\author{Stephanie Wehner}
\affiliation{QuTech, Delft University of Technology, Lorentzweg 1, 2628 CJ Delft, The Netherlands}
\date{\today}
\begin{abstract}
The $q$-qubit Clifford group, that is, the normalizer of the $q$-qubit Pauli
group in $U(2^q)$, is a fundamental structure in quantum information with a wide
variety of applications. We characterize all irreducible subrepresentations of
the two-copy representation $\varphi\tn{2}$ of the Clifford group on the
two-fold tensor product of the space of linear operators
$\mathcal{M}_{2^q}^{\otimes 2}$. In a companion paper [Helsen et al. arXiv:1701.04299 (2017)] we apply
this result to improve the statistics of randomized benchmarking, a method for
characterizing quantum systems.
\end{abstract}

\maketitle
\section{Introduction}
Symmetric structures, encoded as groups, play a fundamental role in the study 
of quantum information theory and quantum mechanics in general. The Pauli group
and its normalizer, the Clifford group, are particularly important in quantum
information, with applications such as quantum error-correcting
codes~\cite{Gottesman2010}, quantum tomographic methods~\cite{Gross2010a}, and
quantum data hiding~\cite{DiVincenzo2002}. Furthermore, operations within the
Clifford group can be efficiently simulated~\cite{Aaronson2004} and the Clifford
group is a unitary $2$-design \cite{DiVincenzo2002}, that is, averages over the
standard representation of the Clifford group reproduce the first
two moments of the Haar average over the full unitary group~\cite{Dankert2009}.
These properties make the Clifford group useful for characterization protocols 
for quantum systems such as randomized benchmarking~\cite{Magesan2011}.\\

A subgroup of the unitary group (in our case the Clifford group) is a unitary $t$-design if the
irreducible subrepresentations of $t$ tensor copies of its standard representation are in one-to-one correspondence
with the irreducible subrepresentations of the same construction involving the 
the full unitary group~\cite{Gross2007a}. This equivalent definition is useful
because the tensor representations of the unitary group are well understood via
Schur-Weyl duality~\cite{Goodman2009}.\\

Recently it has been shown that the $q$-qubit Clifford group is also a unitary
$3$-design~\cite{Zhu2015,Webb2015}. However, simultaneously it was shown that
the multi-qubit Clifford group is not a unitary $4$-design. Consequently, the
representation of $4$ tensor copies of the standard representation of the Clifford group differs from the same construction using
the unitary group. In this paper we will analyze a closely related representation of the Clifford group which we call the two-copy representation. This representation is the tensor product of two tensor copies of the standard representation and two tensor copies of the dual of the standard representation. The structure of the two-copy representation of the single-qubit
Clifford group was analyzed in~\cite{Wallman2014} and used to analyze the
statistical performance of randomized benchmarking.\\

In this paper we provide a complete analysis of the two-copy representation
of the multi-qubit Clifford group for any number of qubits. In a companion 
paper~\cite{Helsen2016}, we use these results to analyze multi-qubit 
randomized benchmarking, leading to a substantial reduction in the amount of 
data required to obtain rigorous and precise estimates using the randomized benchmarking procedure. \\

\section{Preliminaries}
We begin by setting some relevant notation. We denote by $\mc{M}_d$ the vector
space of linear operators from $\md{C}^d$ to $\md{C}^d$. We will only be 
interested in the case $d= 2^q$ where $q\in \md{N}$ is the number of qubits in 
the system. Other finite dimensional vector spaces will often be denoted by $V$, with $|V|$ the dimension of that vector space. We also recall that the vector space $\mc{M}_d$ can be equipped with
the Hilbert-Schmidt inner product which takes the form
\begin{equation}\label{eq:Hilbert-Schmidt}
\inp{A}{B} = \tr(A\ct B) \hspace{5mm}\forall A,B \in \mc{M}_d.
\end{equation}
We also denote the (anti-) commutator of two elements of $\mc{M}_d$ as
\begin{align*}
[A,B] = AB-BA \hspace{5mm} \forall A,B \in \mc{M}_d \tag{commutator}\\
\{A,B\} = AB+BA \hspace{5mm} \forall A,B \in \mc{M}_d \tag{anti-commutator}
\end{align*}
To facilitate later analysis we also recall some standard facts about 
representation and character theory. We will mostly follow~\cite{Fulton2004}. Readers familiar with representation theory may skip
this part.
\subsection{Representation theory}
Let $GL(V)$ be the general linear group over a finite dimensional
(real or complex) vector space $V$, i.e. the group of invertible linear
transformations of $V$. Let $G$ be a finite group. A representation $\varphi_V$ of $G$
on a finite dimensional (real or complex) vector space $V$ is a map
\begin{equation}
\varphi_V:G\to GL(V): g \mapsto \varphi(g)
\end{equation}
with the property
\begin{equation}
\varphi_V(g)\varphi_V(h) = \varphi_V(gh) \hspace{5mm}\forall g,h \in G.
\end{equation}
We call $V$ the space carrying the representation $\varphi_V$. A subspace $W\subset V$ carries a subrepresentation of $\varphi_V$ (denoted $\varphi_W$) if
\begin{equation}
\varphi_V(g)W \subset W
\end{equation}
for all $g\in G$. A representation $\varphi_V$ is called \emph{irreducible} if
there is no non-trivial $(W\neq0)$ proper subspace $W$ of $V$ such that $\varphi_V(g) W\subset W$
for all $g\in G$. Two representations $\varphi_V$ and $\varphi_{V'}$ are called
equivalent, denoted $\varphi\cong \varphi'$, if and only if there exists an invertible linear map  $T:V'\to V$ such that
\begin{equation}
\varphi_V(g) = T \varphi_{V'}(g) T^{-1} \hspace{5mm}\forall g\in G.
\end{equation}
A central result for irreducible representations is Schur's
lemma~\cite{Fulton2004}. Let $\varphi_V,\varphi_{V'}$ be irreducible representations
of a finite group $G$ on spaces $V, V'$. Then Schur's lemma states that a linear map
$A:V\to V'$ satisfies
\begin{equation}\label{eq:commutation}
\varphi_{V'}(g)A = A \varphi_{V}(g) \hspace{5mm}\forall g\in G,
\end{equation}
if and only if $A$ is of the form
\begin{equation}
A = \begin{cases}{} 0 \;\;\:\:\hspace{0.9mm} \text{if}\;\; \varphi_V\not \cong\varphi_{V'},\\
\lambda \id \;\;\text{if}\;\;  \varphi_V \cong\varphi_{V'}.\end{cases}
 \end{equation}
for some $\lambda\in \md{C}$. We note an important corollary. Let $\varphi_V = \oplus_i \varphi_i$ be a
representation of a group $G$ composed of irreducible inequivalent 
representations $\varphi_i$ and let $A:V\to V$ satisfy $\varphi(g) A = A \varphi(g)$ for all $g \in G$. Then we must have
\begin{equation}\label{eq:Schur corrolary}
A = \sum_i \lambda_i P_i
\end{equation}
for some $\lambda_i\in \md{C}$, where the $P_i$ are projectors onto the subspaces of $V$
carrying the irreducible subrepresentations $\varphi_i$.

We next recall the character of a representation. Let $\varphi: G\to V$
be a representation of a finite group $G$ on a finite dimensional (real or
complex) vector space $V$. The character $\chi_\varphi: G\to \md{C} $ of
the representation $\varphi$ is defined as
\begin{align}
\chi_{\varphi}: &G\to\md{C}:g\mapsto  \chi_{\varphi}(g) = \tr_V(\varphi(g)),
\end{align}
where $\tr_V(\;)$ denotes the trace over the vector space $V$.
Characters have a number of useful properties~\cite{Fulton2004} which we recall here.
For representations $\varphi,\varphi'$ we have the relations
\begin{align}
\chi_{\varphi\otimes \varphi'} &= \chi_\varphi\; \chi_{\varphi'},\\
\chi_{\varphi\oplus \varphi'} &= \chi_\varphi +\chi_{\varphi'},
\end{align}
with suitable generalizations to multiple direct sums and tensor products.
The inner product between the characters of two representations $\varphi$ and $\varphi'$ of 
a finite group $G$ is
\begin{equation}
\inp{\chi_\varphi}{\chi_{\varphi'}}
:= \frac{1}{|G|}\sum_{g\in G} \chi_{\varphi}(g)^*\chi_{\varphi'}(g),
\end{equation}
where $*$ denotes complex conjugation. Schur's orthogonality relations state
that for \emph{irreducible} representations $\varphi, \varphi'$ of $G$,
\begin{equation}\label{eq:SchurOrthogonality}
\inp{\chi_\varphi}{\chi_{\varphi'}} = \begin{cases} 0 \hspace{2mm}\text{if} \;\;\varphi\not\cong \varphi'\\
1 \hspace{2mm}\text{if} \;\;\varphi\cong \varphi'.\end{cases}
\end{equation}
We note the following useful corollary. 
Let $\varphi = \oplus_{i}\varphi_i^{\oplus n_i}$ where the $\varphi_i$ are all 
irreducible, inequivalent representations and $n_i$ denotes the multiplicity of
$\varphi_i$ in $\varphi$. Then \cref{eq:SchurOrthogonality} implies
\begin{equation}\label{Schur}
\inp{\chi_\varphi}{\chi_\varphi} = \sum_{i}\;\; n_i^2.
\end{equation}
In particular we also have,
\begin{equation}\label{eq:Schur_ineq}
\inp{\chi_\varphi}{\chi_\varphi} \geq 1,
\end{equation}
with equality if and only if $\varphi$ is irreducible.

\subsection{The Pauli and Clifford groups}

Finally we recall definitions for the Pauli and Clifford groups, note some useful facts about the Pauli group and define what we mean by the ``two-copy representation'' of the Clifford group.

\begin{definition}[Multi-qubit Pauli and Clifford groups]
Take $U(d)$ to be the group of $d\times d$ unitary matrices (where $d=2^q)$, which has a standard representation~\cite{Fulton2004} on the complex vector space $\md{C}^d $. For $q=1$, let $\{v_0,v_1\}$ be an orthonormal basis of $\md{C}^2$ and in this basis define the following linear operators by their action on the basis
\begin{align*}
X v_l = v_{l+1},\ 
Z v_l =(-1)^lv_l,\
Yv_l = iZXv_l = i(-1)^{l+1}v_{l+1},
\end{align*}
for $l\in \{0,1\}$ and addition over indices is taken modulo $2$. Note that $X,Y,Z \in U(2)$. The $q$-qubit Pauli group $\mc{P}_q$ is now defined as the subgroup of the unitary group $\mathrm{U}(2^q)$ 
consisting of all $q$-fold tensor products of $q$ elements of $\mc{P}_1:=\langle X,Z,i\id_2 \rangle$. The $q$-qubit Clifford group $\mc{C}_q$ is the normalizer (up to complex phases) of 
$\mc{P}_q$ in $\mathrm{U}(2^q)$, that is,
\begin{align*}
\mc{C}_q = \{U\in\mathrm{U}(2^q)\;\;:\;\; U\mathcal{P}_q U\ct \subseteq \mathcal{P}_q \}/U(1).
\end{align*}
Both the Clifford and Pauli groups have a standard faithful representation on the vector space $\md{C}^d$ with $d=2^q$ as they are subgroups of the unitary group $U(d)$. We also define $\hat{\mc{P}}_q$ as the subset of $\mc{P}_q$ consisting of all $q$-fold tensor products of the operators $\{\id,X, Y, Z\}$. Note that all the elements of $\hat{\mc{P}}_q$ are Hermitian.
\end{definition}

For a more expansive introduction to the Pauli and Clifford groups see e.g.~\cite{Farinholt2014} and references therein.\\

Next we recall some useful facts about the Pauli group and its standard representation. We begin by noting that the Hermitian subset $\hat{\mc{P}}_q$ of the
Pauli group forms an orthogonal basis for the Hilbert space
$\mc{M}_d$. We can turn this into an orthonormal basis (under
the Hilbert-Schmidt inner product in \cref{eq:Hilbert-Schmidt}) by introducing
normalized Pauli matrices :
\begin{equation}
\sigma_0 = \frac{\id}{\sqrt{d}},\hspace{3mm} \bsq = \left\{\frac{P}{\sqrt{d}}\;\;\|\;\; P\in \hat{\mc{P}}_q \backslash \{\id\}\right\},
\end{equation}
where we have given the normalized identity its own symbol for later convenience. We also define $\bsqh := \bsq\cup\{\sigma_0\}$. We will denote the elements of the set $\bsq$ by Greek letters ($\sigma, \tau, \nu,...$). For the case of a single qubit we denote the normalized $X,Y,Z$ Pauli matrices by $\sigma_X,\sigma_Y,\sigma_Z$. We also, for later convenience, introduce the \emph{normalized} matrix product of two normalized Pauli matrices as
\begin{equation}
\sigma\cdot \tau  := \sqrt{d} \sigma\tau \hspace{5mm}\sigma,\tau\in \bsqh.
\end{equation}
Note that $\sigma\cdot \tau \in \pm\bsqh$ if $[\sigma, \tau]=0$ and $i\sigma\cdot \tau \in \pm\bsq$ if $\{\sigma, \tau\}=0$.
Lastly we define the following parametrized subsets of $\bsq$ and $\bsqh$. For all $\tau
\in\bsq$ we define
\begin{align}
\bs{N}_\tau:= \{\sigma \in \bsq \;\;\|\;\; \{\sigma,\tau\}=0\},\\
\bs{C}_\tau:= \{\sigma \in \bsq\backslash\{\tau\} \;\;\|\;\; [\sigma,\tau]=0\},\\
\bs{\hat{C}}_\tau := \{\sigma \in \bsqh \;\;\|\;\; [\sigma,\tau]=0\}.
\end{align}
Note that we have $|\bs{\hat{C}}_\tau| = |\bs{N}_\tau| = \frac{d^2}{2}$ and 
$\bs{\hat{C}}_\tau$ and $\bs{N}_\tau$ are disjoint for all $\tau \in \bsq$. 
With regard to these sets we also state the following lemma, which we prove in the appendix:

\begin{lemma}\label{lemma:set sizes}
Let $\tau,\tau' \in \bsq$ and $\tau\neq\tau'$. The following equalities hold
\begin{align}\label{induc eq}
|\bs{N}_\tau \cap \bs{\hat{C}}_{\tau'}| = |\bs{\hat{C}}_\tau \cap \bs{\hat{C}}_{\tau'}|= |\bs{\hat{C}}_\tau \cap \bs{N}_{\tau'}| = |\bs{N}_\tau\cap \bs{N}_{\tau'}| = \frac{d^2}{4}.
\end{align}
Also for all $\tau \in \bsq $ we have
\begin{align}
|\bs{N}_{\sigma_0}\cap \bs{\hat{C}}_\tau| = |\bs{N}_{\sigma_0}\cap \bs{N}_\tau| = 0,\\
|\bs{\hat{C}}_{\sigma_0}\cap \bs{\hat{C}}_\tau| = |\bs{\hat{C}}_{\sigma_0}\cap \bs{N}_\tau| = \frac{d^2}{2}.
\end{align}
\end{lemma}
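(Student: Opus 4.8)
The plan is to encode the commutation structure of the Paulis as a nondegenerate bilinear form over the field with two elements and thereby reduce all four equalities in \cref{induc eq} to a single linear-independence statement. First I would record the elementary fact that any two elements of $\bsqh$ either commute or anticommute, so that for each fixed $\tau\in\bsq$ the indicator $f_\tau:\bsqh\to\{0,1\}$ defined by $f_\tau(\sigma)=0$ when $[\sigma,\tau]=0$ and $f_\tau(\sigma)=1$ when $\{\sigma,\tau\}=0$ is well defined. By construction $\bs{\hat{C}}_\tau=f_\tau^{-1}(0)$ and $\bs{N}_\tau=f_\tau^{-1}(1)$, and since these partition $\bsqh$ the four intersections in the lemma are exactly the four joint level sets $\{f_\tau=a,\ f_{\tau'}=b\}$ with $(a,b)\in\{0,1\}^2$. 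It then suffices to show each joint level set has size $d^2/4$.

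Next I would exploit the group structure of the Pauli group modulo phases. Writing the normalized product and tracking signs through $\tau(\sigma\mu)=\pm(\sigma\mu)\tau$ shows $f_\tau(\sigma\cdot\mu)=f_\tau(\sigma)+f_\tau(\mu)\pmod 2$; together with the symmetry of commutation, $b(\sigma,\tau)=b(\tau,\sigma)$, this exhibits $b(\sigma,\tau):=f_\tau(\sigma)$ as a symmetric $\md{F}_2$-bilinear form on the phase-quotiented Pauli group $\mc{P}_q/U(1)\cong\md{F}_2^{2q}$, under the standard identification with $\bsqh$. Crucially this form is nondegenerate: if $b(\sigma,\tau)=0$ for all $\sigma$ then $\tau$ commutes with every Pauli, forcing $\tau=\sigma_0$, which is excluded for $\tau\in\bsq$.

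With this in hand, each $f_\tau$ is a homomorphism $\md{F}_2^{2q}\to\md{F}_2$, and the joint map $(f_\tau,f_{\tau'}):\md{F}_2^{2q}\to\md{F}_2^2$ is surjective exactly when $f_\tau$ and $f_{\tau'}$ are linearly independent. Neither is the zero functional, since $\tau,\tau'\neq\sigma_0$ each anticommute with some Pauli; and $f_\tau\neq f_{\tau'}$, because $f_\tau=f_{\tau'}$ would give $b(\sigma,\tau)=b(\sigma,\tau')$ for every $\sigma$, which by nondegeneracy forces $\tau=\tau'$, contradicting $\tau\neq\tau'$. A surjective homomorphism onto $\md{F}_2^2$ has all four fibers of equal cardinality $|\md{F}_2^{2q}|/4=d^2/4$, which is \cref{induc eq}. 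I would then check the bookkeeping that $\sigma_0$ has $f_\tau(\sigma_0)=f_{\tau'}(\sigma_0)=0$, so it lands only in the fiber $\bs{\hat{C}}_\tau\cap\bs{\hat{C}}_{\tau'}$, consistent with $\bs{\hat{C}}$ containing $\sigma_0$ while $\bs{N}$ does not.

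Finally the $\sigma_0$ statements are immediate from the same viewpoint: since $\sigma_0\propto\id$ commutes with everything, $f_{\sigma_0}\equiv 0$, so $\bs{N}_{\sigma_0}=\emptyset$ and $\bs{\hat{C}}_{\sigma_0}=\bsqh$; intersecting the full set $\bsqh$ with $\bs{\hat{C}}_\tau$ or $\bs{N}_\tau$ returns those sets, each of size $d^2/2$, while intersecting the empty set gives $0$. The only real obstacle is the middle step, namely verifying bilinearity and especially nondegeneracy of the commutation form, together with the care needed to match the inclusion of $\sigma_0$ in $\bs{\hat{C}}$ (and its exclusion from $\bs{N}$) against the fibers; everything else is routine counting.
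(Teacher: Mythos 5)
Your proof is correct, but it takes a genuinely different route from the paper's. The paper's proof is elementary and combinatorial: it first uses complementation ($\bs{N}_\tau$ and $\bs{\hat{C}}_\tau$ partition $\bsqh$ into halves of size $d^2/2$) to set up a linear system expressing three of the four intersection sizes in terms of $|\bs{N}_\tau\cap\bs{N}_{\tau'}|$, and then computes that single quantity by induction on $q$, splitting $\tau=\tau_1\otimes\tau_{q-1}$ and writing $\bs{N}_\tau$ as a union of tensor-product sets so that the induction hypothesis on $q-1$ qubits applies. You instead identify the commutation pairing $b(\sigma,\tau)=f_\tau(\sigma)$ as a symmetric nondegenerate $\md{F}_2$-bilinear form on $\mc{P}_q/U(1)\cong\md{F}_2^{2q}$, observe that for $\tau\neq\tau'$ in $\bsq$ the functionals $f_\tau,f_{\tau'}$ are nonzero and distinct (hence linearly independent over $\md{F}_2$), and conclude that $(f_\tau,f_{\tau'})$ is a surjective homomorphism onto $\md{F}_2^2$ whose four fibers --- exactly the four intersections in \cref{induc eq} --- are cosets of its kernel and so all have size $d^2/4$. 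Your checks of the delicate points (bilinearity via sign-tracking through products, nondegeneracy because only $\sigma_0$ commutes with every Pauli, and the bookkeeping that $\sigma_0$ lies in $\bs{\hat{C}}$-type sets but never in $\bs{N}$-type sets, so the four sets really are the four fibers) are the right ones and all go through. As for what each approach buys: the paper's argument is self-contained and needs nothing beyond counting, but the induction and case analysis are tailored to this one statement; your argument gets all four equalities at once with no induction, makes transparent why the answer is exactly $d^2/4$ (equidistribution over the fibers of a surjection), recovers $|\bs{\hat{C}}_\tau|=|\bs{N}_\tau|=d^2/2$ for free as the fiber sizes of a single nonzero functional, and generalizes immediately (e.g.\ to qudit Pauli groups, with $\md{F}_2$ replaced by $\md{Z}_p$), at the modest cost of setting up the symplectic-form formalism standard in the stabilizer literature.
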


As mentioned above, $\bsqh$ forms an orthonormal basis for $\mc{M}_d$. We can define a representation $\varphi$ of the Clifford group by its action by conjugation on this basis, we have
\begin{equation}
\varphi: \C_q \to \mc{M}_d: C \mapsto \varphi(C)\sigma = C\sigma C\ct, \hspace{3mm} \sigma \in \bsqh,
\end{equation}
where $C$ is the standard representation of the Clifford group on $\md{C}^d$ discussed before. We call $\varphi$ the one-copy representation. Note that this representation is equivalent to the representation $C\otimes C^*$ where $*$ denotes the complex conjugate~\cite{Fulton2004}. It is a standard result~\cite{Wallman2014} that this representation decomposes into two irreducible subrepresentations carried by the spaces
\begin{equation}\label{eq:adjoint}
V_{\mathrm{id}} = \vsp\{\sigma_0\},\hspace{10mm} V_{\mathrm{adj}} = \vsp\{\;\sigma\;\;\|\;\; \sigma \in \bsq\}.
\end{equation}
The representation carried by the space $V_{\mathrm{adj}}$ is called the adjoint representation.\\

Note that we have for all $C\in \C_q$ and $\sigma \in \bsqh$ that $\varphi(C)\sigma  = \pm \tau$ for some $\tau\in \bsqh$. This means that in the basis $\bsqh$ the Clifford group is represented by signed permutation matrices. Note also that the action of the Clifford group through $\varphi$ is transitive on $\bsq$~\cite{Farinholt2014}. \\
Now we define analogously the two-copy representation of the $q$-qubit Clifford group 
$\C_q$ on the Hilbert space $\mc{M}_d\otimes \mc{M}_d = \mc{M}_d\tn{2}$ (hence the name two-copy representation). We define $\varphi\tn{2}$ with respect to its action on the basis
\begin{equation}\label{basis}
\bc{B} = \{\sigma_0\otimes \sigma_0, \sigma_0\otimes\sigma, 
\sigma\otimes\sigma_0, \sigma\otimes\tau\;\;\|\;\; \sigma,\tau\in\bsq\},
\end{equation}
of $\mc{M}_d\tn{2}$. We define the action of $\varphi\tn{2}$ on $\bc{B}$ as
\begin{equation}
\varphi\tn{2}(C)\sigma\otimes\tau = \big(C\sigma C^\dagger\big)\otimes\big(C\tau C^\dagger\big),\hspace{10mm}\sigma\otimes\tau \in \bc{B}.
\end{equation} 
Note that this representation is equivalent to the representation $C\otimes C^*\otimes C\otimes C^*$.
For brevity we will often forget about the tensor product symbol and write $\sigma\otimes \tau$ as $\sigma\tau$ when it is clear from the context.
Note that in the basis $\bc{B}$ the action of a Clifford element $C\in \C_q$ again takes the form of a signed permutation matrix. The rest of the paper will be concerned with identifying the irreducible subrepresentations of $\varphi\tn{2}$.\\ 

\section{The two-copy representation of the multi-qubit Clifford group}

The characterization 
for multiple qubits is more complicated than the single-qubit case considered 
previously~\cite{Wallman2014} because non-trivial elements of the multi-qubit 
Pauli group can commute, while others can 
anti-commute and these relations must be preserved 
under the action of the Clifford group~\cite{Farinholt2014}.
This section will be composed of several lemmas, ultimately culminating in \cref{theorem:two-copy representation}. In these lemmas we will introduce a variety of subspaces of $\mc{M}_d\tn{2}$ and prove that they all carry subrepresentations of $\varphi\tn{2}$. In \cref{theorem:two-copy representation} we will then exactly characterize which of the subspaces carry irreducible subrepresentations.
\renewcommand{\thefootnote}{\fnsymbol{footnote}}
We begin by calculating how many subrepresentations we require for each $q$.  The following lemma, proven in \cite{Zhu2015}, 
characterizes the inner product with itself of the 
character $\chi_{\varphi\tn{2}}$ of the two-copy representation of the Clifford 
group.\footnote[1]{Technically the character inner product of the representation $C\tn{4}$ rather than $C\!\otimes\! C^*\!\otimes \!C\!\otimes \!C^*$ is calculated in~\cite{Zhu2015}, but it can be easily seen that the character inner product is invariant under complex conjugation of some or all tensor factors of the representation.}

\begin{lemma}\label{lemma:upper bound}
Let $\C_q$ be the $q$-qubit Clifford group and $\varphi\tn{2}$ its two-copy 
representation with character $\chi_{\varphi\tn{2}}$. The character inner product of this representation with itself is 
\begin{align}
\inp{\chi_{\varphi\tn{2}}}{\chi_{\varphi\tn{2}}} = 
\begin{cases}{}
15 & \;\;q=1 \\
29 & \;\;q=2 \\
30 & \;\;q\geq 3.
\end{cases}
\end{align}
\end{lemma}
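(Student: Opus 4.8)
The plan is first to collapse the character inner product to a single moment of the standard representation. Because the one-copy representation satisfies $\varphi\cong C\otimes C^*$, we have $\chi_\varphi(C)=\tr(C)\tr(C)^*=|\tr(C)|^2$ and therefore $\chi_{\varphi\tn{2}}(C)=\chi_\varphi(C)^2=|\tr(C)|^4$. This character is real and non-negative, so
\[
\inp{\chi_{\varphi\tn{2}}}{\chi_{\varphi\tn{2}}}=\frac{1}{|\C_q|}\sum_{C\in\C_q}\big|\tr(C)\big|^{8},
\]
the fourth frame potential of the Clifford group. This is exactly the quantity computed in the cited reference; the remark in the footnote that conjugating tensor factors leaves the character inner product invariant is precisely the statement that $|\tr(C)|^8$ is unaffected by replacing some copies of $C$ by $C^*$. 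By \cref{Schur} this number also equals $\sum_i n_i^2$, the dimension of the commutant of $\varphi\tn{2}$, which is the quantity the remainder of the paper is designed to account for.

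To evaluate the frame potential I would move to the Pauli picture, which is natural because $\varphi\tn{2}$ acts as a signed permutation of the basis $\bc{B}$. The Pauli twirl $\sum_{P\in\hat{\mc{P}}_q}PXP=d\,\tr(X)\,\id$ yields the identity $|\tr(C)|^2=\tfrac{1}{d}\sum_{P\in\hat{\mc{P}}_q}\tr\!\big(PCPC\ct\big)$, and hence
\[
\frac{1}{|\C_q|}\sum_{C\in\C_q}|\tr(C)|^{8}=\frac{1}{d^{4}}\sum_{P_1,\dots,P_4\in\hat{\mc{P}}_q}\ \frac{1}{|\C_q|}\sum_{C\in\C_q}\ \prod_{i=1}^{4}\tr\!\big(P_iCP_iC\ct\big).
\]
A short covariance argument (substituting $C\mapsto DCD\ct$ for $D\in\C_q$ and using that $P_i$ appears twice, so signs cancel) shows that the inner Clifford average depends on the tuple $(P_1,\dots,P_4)$ only through its orbit under the simultaneous conjugation action of $\C_q$. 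The outer sum therefore reduces to a sum over these orbits weighted by the orbit sizes. The orbit of a tuple is governed by the pattern of commutation and anticommutation relations among the $P_i$ (equivalently, the symplectic type of the subgroup they generate), and \cref{lemma:set sizes} together with the transitivity of the Clifford action on $\bsq$ supplies precisely the orbit sizes and the counts of commuting and anticommuting partners needed to evaluate each contribution.

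For products of at most three factors the $3$-design property lets one replace the Clifford average by a Haar average over $U(d)$, so the entire departure from the Haar value ($4!=24$ for $d\ge 4$) is a genuinely fourth-order effect. The generic value of the sum works out to $\prod_{k=0}^{2}(2^{k}+1)=2\cdot 3\cdot 5=30$, the number of linearly independent Clifford invariants of fourth order, which gives the answer for $q\ge 3$. The main obstacle is the small-qubit behaviour: for $q=1$ and $q=2$ the Pauli group is too small to realize all four-tuple commutation patterns independently, so several of these invariants coincide and the count drops to $15$ and $29$. Pinning down exactly which types collapse when $d=2$ and $d=4$ — that is, the precise linear relations among the invariants — is where the real work lies; everything else is the bookkeeping of orbit sizes supplied by \cref{lemma:set sizes}. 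Since there are only three cases, each finite and explicit, I would close by verifying the values $15,29,30$ directly once the orbit structure for each $q$ is in hand.
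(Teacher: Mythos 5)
Your reduction of the character inner product to a frame potential is correct: since $\varphi\tn{2}\cong C\otimes C^*\otimes C\otimes C^*$, the character is $\chi_{\varphi\tn{2}}(C)=|\tr(C)|^4$, so $\inp{\chi_{\varphi\tn{2}}}{\chi_{\varphi\tn{2}}}=\frac{1}{|\C_q|}\sum_{C\in\C_q}|\tr(C)|^8$, and this observation also correctly reproduces the content of the paper's footnote (invariance of the inner product under conjugating tensor factors). Your Pauli-twirl identity $|\tr(C)|^2=\tfrac{1}{d}\sum_{P\in\hat{\mc{P}}_q}\tr(PCPC\ct)$ and the covariance argument showing that the Clifford average depends only on the conjugation orbit of the tuple $(P_1,\dots,P_4)$ are both sound. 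Note, however, that the paper does not prove this lemma at all: it imports the value from the cited reference \cite{Zhu2015}, where the computation is actually carried out. So the only thing a blind proof could contribute beyond the paper is that computation itself, and that is exactly what your proposal omits.

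The genuine gap is that after setting up the orbit-reduction framework you never enumerate the orbits, never evaluate the Clifford average on each orbit type, and never sum the contributions. The values $15$, $29$, $30$ are asserted rather than derived: the generic value $30=\prod_{k=0}^{2}(2^k+1)$ is quoted as a known fact, and the collapse to $15$ and $29$ at $q=1,2$ is explicitly deferred ("where the real work lies"). Moreover, \cref{lemma:set sizes} cannot by itself supply the missing orbit data, contrary to what you suggest: it counts only pairwise commutation and anticommutation partners, whereas orbits of $4$-tuples of Pauli operators under Clifford conjugation are also determined by the linear dependencies among the $P_i$ (for instance whether $P_3\propto P_1P_2$, and with which sign, since conjugation preserves products), so a strictly finer classification than pairwise relations is required; this finer classification is precisely what makes the small-$q$ cases delicate. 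As it stands, the proposal reduces the lemma to an equivalent, still-unsolved counting problem and stops. It is a plausible strategy, close in spirit to the orbit-counting argument of \cite{Zhu2015} that the paper relies on, but it is not a proof of the stated values.
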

By \cref{Schur}, this number provides an upper limit to how many (in)equivalent 
irreducible subrepresentations the representation $\varphi\tn{2}$ can contain. We will now, over the course of several lemmas (\cref{lemma:adjoint representations,lemma:antisymmetric sector,lemma:symmetric commuting sector,lemma:symmetric anti-commuting sector,lemma:diagonal sector}), divide the space $\mc{M}_d\tn{2}$ into subspaces carrying subrepresentations of $\varphi\tn{2}$. This will eventually culminate in \cref{theorem:two-copy representation} where we prove that all the subrepresentations derived in  \cref{lemma:adjoint representations,lemma:antisymmetric sector,lemma:symmetric commuting sector,lemma:symmetric anti-commuting sector,lemma:diagonal sector} are in fact irreducible.\\

We continue by defining subspaces of the space $\mc{M}_d\tn{2}$ (spanned by $\bc{B}$) that carry subrepresentations of $\C_q$. Not all of these spaces will carry irreducible representations, these will be divided further in \cref{lemma:adjoint representations,lemma:antisymmetric sector,lemma:symmetric commuting sector,lemma:symmetric anti-commuting sector,lemma:diagonal sector}.
\begin{definition}\label{spaces}
Let $\bc{B}$ be the basis for $\mc{M}_d\tn{2}$ as in 
\cref{basis} and define the vectors
\begin{align}
	A_{\sigma,\tau} &:= \frac{1}{\sqrt{2}}(\sigma\tau - \tau\sigma), \\
	S_{\sigma,\tau} &:= \frac{1}{\sqrt{2}}(\sigma\tau + \tau\sigma)
\end{align}	
for $\sigma, \tau \in\bsq$ and $\sigma\neq \tau$.
We define the following subspaces of $\mc{M}_d\tn{2}$:
\begin{align}
V_{\mathrm{id}} &:= \vsp\{\sigma_0\sigma_0\}, \tag{trivial}\\
V_{\mathrm{r}} &:= \vsp\{\sigma_0\tau\;\;\|\;\;\tau\in\bsq\}, \tag{right 
adjoint}\\
V_{\mathrm{l}} &:= \vsp\{\tau\sigma_0\;\;\|\;\;\tau\in\bsq\}, \tag{left 
adjoint}\\
V_{\mathrm{d}} &:= \vsp\{\tau\tau\;\;\|\;\;\tau\in\bsq\}, 
\tag{diagonal sector}\\
\Vsc &:= 
\vsp\Bigl\{S_{\sigma,\tau}\;\; \|\;\; \sigma\in \bs{C}_\tau,\;\; \tau \in \bsq\Bigr\},
\tag{symmetric commuting sector}\\
\Vsa &:= 
\vsp\Bigl\{S_{\sigma,\tau}\;\; \|\;\; \sigma\in \bs{N}_\tau,\;\; \tau \in \bsq\Bigr\},
\tag{symmetric anti-commuting sector}\\
\Vasc &:= 
\vsp\Bigl\{A_{\sigma,\tau}\;\; \|\;\; \sigma\in \bs{C}_\tau,\;\; \tau \in \bsq\Bigr\},
\tag{antisymmetric commuting sector}\\
\Vasa &:= 
\vsp\Bigl\{A_{\sigma,\tau}\;\; \|\;\; \sigma\in \bs{N}_\tau,\;\; \tau \in \bsq\Bigr\}.
\tag{antisymmetric anti-commuting sector}
\end{align}
\end{definition}

These spaces do not all carry irreducible subrepresentations of $\varphi\tn{2}$ but they do all carry subrepresentations. This is proven in the following lemma:

\begin{lemma}\label{lemma:subrepresentations}
	All spaces $W$ defined in \cref{spaces} carry a subrepresentation of the representation $\varphi\tn{2}$ of the Clifford 
	group $\C_q$, that is
\begin{equation}
\varphi\tn{2}(C)W \subset W \hspace{5mm} \forall C\in \C_q.
\end{equation}
Note that $W$ may be empty for $q=1$, in which case the statement 
holds trivially.
\end{lemma}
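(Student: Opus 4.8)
The plan is to exploit the fact, recorded in the preliminaries, that conjugation by any Clifford element $C$ acts on the orthonormal Pauli basis $\bsqh$ as a \emph{signed permutation}: $\varphi(C)\sigma = C\sigma C\ct = \pm\sigma'$ for some $\sigma'\in\bsqh$, that this permutation fixes $\sigma_0$ (since $C\sigma_0 C\ct=\sigma_0$) and maps $\bsq$ bijectively onto itself, and that it preserves the (anti)commutation relations, because $C[\sigma,\tau]C\ct=[C\sigma C\ct,C\tau C\ct]$ and likewise for anticommutators. Granting these three facts, proving $\varphi\tn{2}(C)W\subset W$ reduces to checking that $\varphi\tn{2}(C)$ sends each spanning vector of $W$ to $\pm$ another spanning vector of the \emph{same} $W$; since then the image of a spanning set lies in the span, invariance follows.

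For the spaces built from products of basis elements I would compute directly, writing $C\sigma C\ct=s_\sigma\sigma'$ with $s_\sigma\in\{\pm1\}$ and $s_{\sigma_0}=1,\ \sigma_0'=\sigma_0$. Then $\varphi\tn{2}(C)\,\sigma_0\otimes\sigma_0=\sigma_0\otimes\sigma_0$ gives invariance of $V_{\mathrm{id}}$; for $V_{\mathrm{r}}$ and $V_{\mathrm{l}}$ one factor remains $\sigma_0$ while the other becomes $\pm\tau'$ with $\tau'\in\bsq$, keeping us in the respective space; and for $V_{\mathrm{d}}$ the essential point is the sign cancellation $\varphi\tn{2}(C)\,\tau\otimes\tau=s_\tau^2\,\tau'\otimes\tau'=\tau'\otimes\tau'$, which again lands among the diagonal vectors.

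For the four symmetric and antisymmetric sectors the central computation is
\begin{align*}
\varphi\tn{2}(C)\,A_{\sigma,\tau} &= \tfrac{1}{\sqrt2}\big(s_\sigma s_\tau\,\sigma'\tau' - s_\tau s_\sigma\,\tau'\sigma'\big)=s_\sigma s_\tau\,A_{\sigma',\tau'},\\
\varphi\tn{2}(C)\,S_{\sigma,\tau} &= s_\sigma s_\tau\,S_{\sigma',\tau'},
\end{align*}
so conjugation sends $A$-type vectors to $\pm A$-type vectors and $S$-type to $\pm S$-type; the antisymmetric and symmetric labels are thus preserved automatically. It then remains only to confirm that the image is a legitimate spanning vector of the same sector, namely that $\sigma'\neq\tau'$ and that the commuting versus anti-commuting class is unchanged. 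The first is immediate from injectivity of the signed permutation, and the second from preservation of $\com{\sigma}{\tau}$ and $\anti{\sigma}{\tau}$ under conjugation, whence $\sigma\in\bs{C}_\tau\Rightarrow\sigma'\in\bs{C}_{\tau'}$ and $\sigma\in\bs{N}_\tau\Rightarrow\sigma'\in\bs{N}_{\tau'}$. This keeps each of $\Vsc,\Vasc$ and $\Vsa,\Vasa$ invariant.

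I do not anticipate a serious obstacle: the argument is essentially bookkeeping once the signed-permutation picture is fixed. The only two points deserving care are the sign cancellation that keeps $V_{\mathrm{d}}$ invariant (both factors carry the identical sign $s_\tau$) and the verification that conjugation respects the partition of pairs into commuting and anti-commuting classes — both being direct consequences of the standard Clifford-group facts cited in the preliminaries. The case $q=1$, where $\bs{N}_\tau$ or $\bs{C}_\tau$ can be empty so that the corresponding $W$ is the zero space, holds trivially as already noted in the statement.
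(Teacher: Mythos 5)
Your proof is correct and follows essentially the same route as the paper's: both arguments use the facts that Clifford conjugation fixes $\sigma_0$, permutes $\bsq$ up to signs, and preserves (anti)commutation relations, so that $\varphi\tn{2}(C)$ sends each spanning vector of a given space to $\pm$ another spanning vector of that same space. Your version merely makes the sign bookkeeping ($s_\sigma s_\tau$ factors) explicit where the paper absorbs it into the notation $S_{C\sigma C\ct, C\tau C\ct}$ and $A_{C\sigma C\ct, C\tau C\ct}$.
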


\begin{proof}
First note that $C\sigma_0 C\ct = \sigma_0$ for all $C\in \C_q$ and that for any $C\in \C_q$ and $\sigma\in \bsq$ there exists a $\tau\in \bsq $ such that $ C\sigma C\ct = \pm \tau$. This means the spaces $V_{\mathrm{id}}, V_\mathrm{r},V_\mathrm{l}$ and $V_\mathrm{d}$ carry a subrepresentation of $\varphi\tn{2}$. Note also that we have 
\begin{align}
\varphi\tn{2}(C)S_{\sigma,\tau} = S_{C\sigma C\ct, C\tau C\ct}\hspace{5mm}\hspace{2mm} C\in \C_q,\\
\varphi\tn{2}(C)A_{\sigma,\tau} = A_{C\sigma C\ct, C\tau C\ct}\hspace{5mm}\hspace{2mm} C\in \C_q,
\end{align}
for all $\sigma, \tau \in\bsq$ and $\sigma\neq \tau$ and also
\begin{align}
\{C\sigma C\ct, C\tau C\ct\} =0    &\iff \{\sigma, \tau\} = 0 \hspace{5mm}\hspace{2mm} C\in \C_q,\\
[C\sigma C\ct, C\tau C\ct] =0    &\iff [\sigma, \tau] = 0 \hspace{5mm}\hspace{3.3mm} C\in \C_q
\end{align}
for all $\sigma, \tau \in\bsqh$. From these equations it is easy to see that $\Vsc, \Vsa, \Vasc $ and $\Vasa$ carry subrepresentations of $\varphi\tn{2}$ as well. 
\end{proof} 
Note that since $V_{\mathrm{id}}$ is a trivial representation it is automatically irreducible. Over the next few lemmas we will further characterize the other spaces defined in \cref{spaces}, beginning with the diagonal sector, i.e. the space $V_\mathrm{d}$.

\begin{lemma}[Diagonal sector]\label{lemma:diagonal sector}
Take the space $V_\mathrm{d}$ as defined in \cref{spaces} and define the following $3$ subspaces 
\begin{align}
V_0 &:= \vsp\left\{w\in V_\mathrm{d} \;\;\|\;\;w = \frac{1}{\sqrt{d^2-1}}\sum_{\sigma\in \bsq} \sigma\sigma\right\}\\
V_{1} &:=\vsp\left\{v \in V_\mathrm{d}\;\;\|\;\; v = \sum_{\sigma\in \bsq}\lambda_\sigma \sigma\sigma,\;\; \sum_{\sigma\in \bsq}\lambda_\sigma = 0,\;\; \sum_{\sigma\in \bs{N}_\tau}\lambda_\sigma = -\frac{d}{2}\lambda_\tau,\;\; \forall \tau\in \bsq \right\}\\
V_{2}&:= \vsp\left\{v \in V_\mathrm{d}\;\;\|\;\; v = \sum_{\sigma\in\bsq}\lambda_\sigma\sigma\sigma,\;\; \sum_{\sigma\in \bsq}\lambda_\sigma = 0,\;\; \sum_{\sigma\in \bs{N}_\tau}\!\!\lambda_\sigma  = \frac{d}{2}\lambda_\tau,\;\; \forall \tau\in \bsq\right\}
\end{align}
with $|V_1| = \frac{d(d+1)}{2} -1$ and $|V_2| = \frac{d(d-1)}{2} -1$. We have the following statements
\begin{itemize}
	\item For $q=1$ the spaces $V_0$ and $V_1$ carry irreducible subrepresentations of $\varphi\tn{2}$ and $V_2=0$.
	\item For $q \geq 2$ the spaces $V_0, V_1$ and $V_2$ carry irreducible subrepresentations of $\varphi\tn{2}$. 
\end{itemize}
\end{lemma}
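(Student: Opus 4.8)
The plan is to recognize that, on the diagonal sector, the two-copy representation collapses to an ordinary permutation representation, and then to diagonalize a single Clifford-commuting operator whose eigenspaces are exactly $V_0,V_1,V_2$. The starting observation is that for any $C\in\C_q$ and $\tau\in\bsq$ we have $C\tau C\ct=\pm\tau'$ for some $\tau'\in\bsq$, so $\varphi\tn{2}(C)(\tau\otimes\tau)=(C\tau C\ct)\otimes(C\tau C\ct)=\tau'\otimes\tau'$ with the sign squared away. Hence $\varphi\tn{2}$ restricted to $V_\mathrm{d}$ is (isomorphic to) the permutation representation of $\C_q$ on the set $\bsq$, which is transitive.

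Next I would introduce the anticommutation (adjacency) operator $A$ on $V_\mathrm{d}$ defined by $A(\tau\tau)=\sum_{\sigma\in\bs{N}_\tau}\sigma\sigma$. Because the Clifford action preserves anticommutation (established in \cref{lemma:subrepresentations}), $A$ commutes with $\varphi\tn{2}(C)$ for every $C$; it is also real symmetric in the basis $\{\tau\tau\}$, so $\varphi\tn2(C)$ maps eigenvectors of $A$ to eigenvectors with the same eigenvalue, and each eigenspace carries a subrepresentation. Writing out the matrix of $A^2$ and invoking \cref{lemma:set sizes} --- which gives $|\bs{N}_\tau|=d^2/2$ on the diagonal and $|\bs{N}_\tau\cap\bs{N}_{\tau'}|=d^2/4$ off the diagonal --- yields $A^2=\tfrac{d^2}{4}(\id+J)$, where $J$ is the all-ones operator $J(\tau\tau)=\sum_\sigma\sigma\sigma$. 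The all-ones vector $w_0=\sum_\sigma\sigma\sigma$ satisfies $Aw_0=\tfrac{d^2}{2}w_0$ by regularity, while on $w_0^\perp$ one has $J=0$ and hence $A^2=\tfrac{d^2}{4}\id$, so $A$ has exactly the three eigenvalues $\tfrac{d^2}{2},+\tfrac{d}{2},-\tfrac{d}{2}$. The $\tau$-coordinate of $Av$ for $v=\sum_\sigma\lambda_\sigma\sigma\sigma$ is precisely $\sum_{\sigma\in\bs{N}_\tau}\lambda_\sigma$, so the defining conditions of $V_1$ and $V_2$ are exactly the eigenvalue equations $Av=-\tfrac{d}{2}v$ and $Av=+\tfrac{d}{2}v$ (the side condition $\sum_\sigma\lambda_\sigma=0$ being orthogonality to $w_0$), while $V_0=\vsp\{w_0\}$ is the top eigenspace. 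The stated dimensions then follow from $\tr A=0$ (no $\tau$ anticommutes with itself) together with $\dim V_1+\dim V_2=d^2-2$, giving $\dim V_1=d(d+1)/2-1$ and $\dim V_2=d(d-1)/2-1$; in particular $V_2=0$ when $d=2$, i.e.\ $q=1$.

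It remains to prove irreducibility, for which I would compute the character inner product $\inp{\chi_{V_\mathrm{d}}}{\chi_{V_\mathrm{d}}}$. Since on $V_\mathrm{d}$ the representation is a (genuine, unsigned) permutation representation, $\chi_{V_\mathrm{d}}(C)$ counts the $\sigma\in\bsq$ fixed up to sign by $C$, and by the orbit-counting lemma $\inp{\chi_{V_\mathrm{d}}}{\chi_{V_\mathrm{d}}}$ equals the number of $\C_q$-orbits on $\bsq\times\bsq$. Using that the Clifford group acts transitively on ordered pairs of distinct commuting Paulis and on ordered pairs of anticommuting Paulis, there are exactly three orbits for $q\geq2$ (diagonal, distinct-commuting, anticommuting) and two orbits for $q=1$ (the distinct-commuting orbit being empty). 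By \cref{Schur}, $V_\mathrm{d}$ therefore decomposes into exactly three (resp.\ two) pairwise-inequivalent irreducibles, each with multiplicity one. Since $V_0,V_1,V_2$ (resp.\ $V_0,V_1$) are that many nonzero, mutually orthogonal subspaces, each carrying a subrepresentation, whose dimensions sum to $\dim V_\mathrm{d}=d^2-1$, each must coincide with a single irreducible constituent and hence be irreducible.

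The main obstacle is this last step: establishing that the number of $\C_q$-orbits on $\bsq\times\bsq$ is exactly three (resp.\ two), which rests on the transitivity of the Clifford group on commuting and on anticommuting ordered Pauli pairs. The spectral analysis through $A$ is essentially forced by \cref{lemma:set sizes} and cleanly identifies the spaces and their dimensions, but it does not by itself establish irreducibility --- for that the orbit count (or an equivalent direct evaluation of $\inp{\chi_{V_\mathrm{d}}}{\chi_{V_\mathrm{d}}}$) is indispensable.
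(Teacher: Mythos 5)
Your proposal is correct and follows essentially the same route as the paper: the operator $A$ you introduce is exactly the paper's map $\bc{T}$, your computation of $A^2$ via \cref{lemma:set sizes}, the trace argument for the dimensions, and the Burnside/character-inner-product count of orbits on $\bsq\times\bsq$ to pin down the number of irreducible constituents all mirror the paper's proof. The only (minor) difference is that you treat $q=1$ uniformly within the same argument, whereas the paper defers that case to the earlier single-qubit analysis of Wallman and Flammia.
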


\begin{proof}
The special case of $q=1$ was treated in \cite{Wallman2014}. We will treat the case $q\geq 2$.
We begin by establishing that the space $V_\mathrm{d} = \vsp\{\sigma\sigma\;\|\; \sigma\in \bsq\}$ has exactly three subspaces carrying
inequivalent subrepresentations of $\varphi\tn{2}$. One can see this by considering the character 
$\chi_d$ of $\varphi\tn{2}$ restricted to $V_\mathrm{d}$. It is easy to see by direct calculation that for all $C\in \C_q$ 
we have $\chi_\mathrm{d}(C) = F(C)$ where $F(C)$ is the number of non-identity Pauli 
matrices fixed under conjugation by $C$ up to a sign. This means the character inner product $\inp{\chi_\mathrm{d}}{\chi_\mathrm{d}}$ is given by
\begin{equation}
\inp{\chi_\mathrm{d}}{\chi_\mathrm{d}} = \frac{1}{|C|}\sum_{C\in \C_q} F(C)^2.
\end{equation}
By a generalized version of Burnside's Lemma (see~\cite{Zhu2015}) we can relate this to the number of orbits of the Clifford group (up to signs) on the set $\bsq \times \bsq$. These orbits were characterized in~\cite{Zhu2015} which yielded $\inp{\chi_\mathrm{d}}{\chi_\mathrm{d}} =3$ for $q\geq2$. This means, by \cref{Schur}, that 
$V_\mathrm{d}$ must contain exactly three inequivalent irreducible subrepresentations (all with multiplicity one). It is easy to see that $V_0$ carries a trivial subrepresentation by noting that $\varphi\tn{2}$ acts as a permutation on the basis of $V_\mathrm{d}$. Hence we can write $V_\mathrm{d} = V_0\oplus V_{\mathrm{orth}}$ where
\begin{equation}
V_{\mathrm{orth}} := \vsp\left\{v\in V_{\mathrm{d}}\;\;\|\;\; v = \sum_{\sigma\in \bsq}\lambda_\sigma \sigma\sigma,\;\;\sum_{\sigma\in \bsq}\lambda_{\sigma}=0\right\}.
\end{equation}
Because of the character argument given above we know this space must decompose into exactly two orthogonal subspaces $V_1,V_2$ which carry irreducible inequivalent subrepresentations of $\varphi\tn{2}$. We now characterize these subrepresentations.
We define the linear map $\bc{T}:V_\mathrm{d}\to V_\mathrm{d}$ by its action on the basis of $V_\mathrm{d}$. For all $\tau \in \bsq$ we have
\begin{align}
\bc{T}(\tau\tau) &:= \sum_{\sigma \in \bs{N}_\tau}\sigma\sigma.
\end{align}
It is easy to see that this map commutes with the action of $\varphi\tn{2}$ on $V_\mathrm{d}$. Hence, by the character argument above and Schur's lemma (\cref{eq:Schur corrolary}), it must be of the form
\begin{equation}
\bc{T} = a_0 P_0 + a_1 P_1 + a_2 P_2,
\end{equation}
where $P_0$ is the projector onto the space $V_0$ and $P_1,P_2$ are projectors onto the eigenspaces of $\bc{T}$ with eigenvalues $a_1,a_2$ respectively. We will label these eigenspaces $V_1$ and $V_2$. Note that $a_1,a_2\in \md{R}$ since $\bc{T}$ is symmetric. We will also assume that $a_1\leq a_2$. This can always be achieved by relabeling. By inspection we see that $a_0 = \frac{d^2}{2}$. We find can $a_1,a_2$ by considering the squared operator $\bc{T}^2$. We can compute its matrix elements in the given basis of $V_\mathrm{d}$ as
\begin{align}
\left[\bc{T}^2\right]_{\tau \tau'} = \inp{\tau\tau}{\mc{T}^2(\tau'\tau')}&= \sum_{\sigma\in \bs{N}_{\tau'}} \sum_{\hat{\sigma}\in \bs{N}_\sigma}\inp{\tau}{\hat{\sigma}}^2 \\
&=|\bs{N}_\tau\cap \bs{N}_{\tau'}|\\
&= \frac{d^2}{4} + \frac{d^2}{4}\delta_{\tau,\tau'},
\end{align}
where the last equality follows from \cref{lemma:set sizes} and $|\bs{N}_\tau| = \frac{d^2}{2}$ for all $\tau \in \bsq$.
From this characterization we can find the action of $\bc{T}^2$ on $v\in V_{\mathrm{orth}}$:
\begin{align}
\bc{T}^2 (v) &= \sum_{\sigma\in \bsq}\lambda_\sigma \bc{T}^2(\sigma\sigma)\\
&=\sum_{\sigma\in \bsq} \left(\lambda_\sigma \frac{d^2}{2} + \sum_{\hat{\sigma} \in \bsq\backslash\{\sigma\}} \lambda_{\hat{\sigma}}\frac{d^2}{4}\right)\sigma\sigma\\
& = \sum_{\sigma\in \bsq} \left(\lambda_\sigma \frac{d^2}{2} - \lambda_{\sigma}\frac{d^2}{4}\right)\sigma\sigma\\
& = \frac{d^2}{4} v,
\end{align}
where we used the definition of $v\in V_{\mathrm{orth}}$. This means that we must have $a_1^2 = a^2_2 = \frac{d^2}{4}$. There are hence two options: either $a_1 = a_2$ or $a_1 = -a_2$. We can exclude the first option by noting that the operator $\bc{T}$ is traceless. Hence we must have 
\begin{equation}
\tr(\bc{T}) = 0 = a_0 + a_1 |V_1| + a_2 |V_2| = \frac{d^2}{2}   + a_1 |V_1| + a_2 |V_2|,
\end{equation}
where $|V_i|$ is the dimension of the space $V_i$. By noting that $|V_1|+ |V_2| = d^2-2$ and that $V_1, V_2 \neq 0$ (this is a consequence of the character argument above) we find the only possible solution to be
\begin{align}
|V_1| &= \frac{d(d+1)}{2} -1, \hspace{10mm} a_1 = -\frac{d}{2},\\
|V_2| &= \frac{d(d-1)}{2} -1, \hspace{10mm} a_2 = \frac{d}{2}.
\end{align}
We can now diagonalize the operator $\bc{T}$ to find the description for the spaces $V_1,V_2$ given in the lemma statement.

\end{proof}
Next we establish an equivalence between the representations carried by $V_\mathrm{r}$ and $V_\mathrm{l}$ and two subspaces in the symmetric and antisymmetric sectors. All four of these spaces will be equivalent to the adjoint representation of the Clifford group, already mentioned in \cref{eq:adjoint}.
\begin{lemma}[Adjoint representations]\label{lemma:adjoint representations}
Take the vector spaces $V_\mathrm{r},V_\mathrm{l}$ as defined in \cref{spaces}. Also define the vector spaces 
	\begin{align}
	\Vadjc &:= 
\vsp\Bigl\{v^{[\mathrm{adj}]}_\tau\in \Vsc\;\;\|\;\; v^{[\mathrm{adj}]}_\tau = \frac{1}{\sqrt{2|\bs{C}_\tau|}} \sum_{\sigma \in \bs{C}_\tau}S_{\sigma,\sigma\cdot\tau},\;\;\tau\in\bsq\Bigr\} 
\tag{symmetric adjoint}\\
\Vadja &:= \vsp\Bigl\{v^{\{\mathrm{adj}\}}_\tau\in \Vasa \;\;\|\;\;v^{\{\mathrm{adj}\}}_\tau= \frac{1}{\sqrt{2|\bs{N}_\tau|}} \sum_{\sigma\in \bs{N}_\tau}A_{\sigma,i\sigma\cdot\tau},\;\;\tau\in\bsq\Bigr\} 
\tag{antisymmetric adjoint}
\end{align}
located in the symmetric commuting and antisymmetric anti-commuting sectors. The spaces $V_\mathrm{r},V_\mathrm{l}, \Vadja$ and $\Vadjc$ carry equivalent irreducible representations.

\end{lemma}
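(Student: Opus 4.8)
The plan is to transport irreducibility from a reference space on which it is already known to the two ``adjoint'' spaces $\Vadjc$ and $\Vadja$, using an explicit intertwiner and Schur's lemma. First I would record that $V_\mathrm{r}$ and $V_\mathrm{l}$ both carry the adjoint representation: since $\varphi\tn{2}(C)(\sigma_0\otimes\tau)=\sigma_0\otimes C\tau C\ct$ and $\varphi\tn{2}(C)(\tau\otimes\sigma_0)=C\tau C\ct\otimes\sigma_0$, the restrictions of $\varphi\tn{2}$ to $V_\mathrm{r}$ and to $V_\mathrm{l}$ are each isomorphic (by dropping the inert $\sigma_0$ factor) to the one-copy adjoint representation carried by $V_{\mathrm{adj}}$ in \cref{eq:adjoint}. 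As that representation is irreducible, $V_\mathrm{r}\cong V_\mathrm{l}$ and both are irreducible, so it remains only to prove $\Vadjc\cong\Vadja\cong V_\mathrm{r}$.

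Next I would build an intertwiner $T:V_\mathrm{r}\to\Vadjc$ by setting $T(\sigma_0\tau)=v^{[\mathrm{adj}]}_\tau$ on the spanning set of $V_\mathrm{r}$ and extending linearly, and analogously $T':V_\mathrm{r}\to\Vadja$, $\sigma_0\tau\mapsto v^{\{\mathrm{adj}\}}_\tau$. The crux is to verify $\varphi\tn{2}(C)\,v^{[\mathrm{adj}]}_\tau=\pm\, v^{[\mathrm{adj}]}_{\tau'}$ whenever $C\tau C\ct=\pm\tau'$, with the sign matching the action of $\varphi\tn{2}$ on $\sigma_0\tau$; this is exactly what makes $T$ commute with the group action. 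For this I would invoke \cref{lemma:subrepresentations}, namely $\varphi\tn{2}(C)S_{\sigma,\rho}=S_{C\sigma C\ct,C\rho C\ct}$, together with the multiplicativity $C(\sigma\cdot\rho)C\ct=(C\sigma C\ct)\cdot(C\rho C\ct)$ and the fact that conjugation preserves commutation and anti-commutation, so that $\sigma\mapsto C\sigma C\ct$ maps $\bs{C}_\tau$ bijectively onto $\bs{C}_{\tau'}$ and $\bs{N}_\tau$ onto $\bs{N}_{\tau'}$. Under these identities the Clifford action merely permutes the summands defining $v^{[\mathrm{adj}]}_\tau$ onto those defining $v^{[\mathrm{adj}]}_{\tau'}$, and the identical argument with $A$ in place of $S$ and $i\sigma\cdot\tau$ in place of $\sigma\cdot\tau$ handles $T'$.

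Finally I would note each $v^{[\mathrm{adj}]}_\tau$ and $v^{\{\mathrm{adj}\}}_\tau$ is nonzero: the normalization is chosen so that it is a unit vector, which one checks by observing that $\sigma\mapsto\sigma\cdot\tau$ (respectively $\sigma\mapsto i\sigma\cdot\tau$) is a fixed-point-free involution on $\bs{C}_\tau$ (respectively $\bs{N}_\tau$), so that each symmetric or antisymmetric basis vector occurs twice in the sum and the summands reinforce rather than cancel. Thus $T$ is a nonzero intertwiner out of the irreducible space $V_\mathrm{r}$, so by Schur's lemma its kernel vanishes; since the vectors $v^{[\mathrm{adj}]}_\tau$ span $\Vadjc$ by definition, $T$ is also onto, hence an isomorphism, giving $\Vadjc\cong V_\mathrm{r}$ and in particular $\Vadjc$ irreducible. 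The same argument via $T'$ yields $\Vadja\cong V_\mathrm{r}$, completing the proof that $V_\mathrm{r},V_\mathrm{l},\Vadjc,\Vadja$ carry equivalent irreducible representations.

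The main obstacle I expect is the sign bookkeeping in the intertwining identity: because $C\sigma C\ct$ and $C\tau C\ct$ each carry signs, and because $\sigma\cdot\tau$ and $i\sigma\cdot\tau$ may equal the negative of a normalized Pauli, one must track how these signs propagate through the (anti)symmetrized combinations $S$ and $A$ to confirm that the image is precisely $\pm v_{\tau'}$ with the overall sign consistent with the action on $\sigma_0\tau$. Everything else reduces to the combinatorics of $\bs{C}_\tau$ and $\bs{N}_\tau$ already available from \cref{lemma:set sizes} and is routine.
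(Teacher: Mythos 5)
Your proposal is correct and takes essentially the same route as the paper: both arguments reduce $V_\mathrm{r}$ and $V_\mathrm{l}$ to the known irreducible adjoint representation and then construct the explicit intertwiner $\sigma_0\tau\mapsto v^{[\mathrm{adj}]}_\tau$ (resp. $v^{\{\mathrm{adj}\}}_\tau$), verifying it commutes with $\varphi\tn{2}$ via preservation of (anti)commutation and transitivity on $\bsq$. The only minor difference is at the end: the paper checks that the vectors $v^{[\mathrm{adj}]}_\tau$ are orthonormal to conclude the map is an isomorphism, whereas you obtain injectivity from Schur's lemma (a nonzero intertwiner out of an irreducible representation has trivial kernel) and surjectivity because the $v_\tau$ span $\Vadjc$ by definition---a valid and slightly leaner shortcut within the same overall strategy.
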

\begin{proof}
Note that the representations carried by the spaces $V_\mathrm{r},V_\mathrm{l}$ are trivially equivalent to the adjoint  
representation~(\cref{eq:adjoint}) of the Clifford group, which is  
irreducible~\cite{DiVincenzo2002}. This leaves us with the spaces $\Vadja$ and $\Vadjc$.
We begin by noting that the spaces $\Vadjc, \Vadja$ carry subrepresentations. This is easily seen by by taking $v^{[\mathrm{adj}]}_\tau\in \Vadjc$ as defined in the lemma statement and writing
\begin{align}
\varphi\tn{2}(C)v^{[\mathrm{adj}]}_\tau &= \frac{1}{\sqrt{2|\bs{C}_\tau|}} \sum_{\sigma\in \bs{C}_\tau}S_{C\sigma C\ct,(C\sigma C\ct) \cdot (C\tau C\ct)}\\
 &= \frac{1}{\sqrt{2|\bs{C}_\tau|}} \sum_{C\ct\sigma C\in \bs{C}_\tau} S_{\sigma, \sigma  \cdot C\tau C\ct}\\
  &= \frac{1}{\sqrt{2|\bs{C}_\tau|}} \sum_{\sigma\in \bs{C}_{C\tau C\ct}} S_{\sigma, \sigma  \cdot C\tau C\ct}\\
  &= v^{[\mathrm{adj}]}_{C\tau C\ct} \in \Vadjc,
\end{align}
where we used the fact that the action of the Clifford group preserves commutativity of elements of the Pauli group and acts transitively on $\bsq$. We have a similar argument for $\Vadja$.
Note also that the vectors spanning $\Vadjc$ as given in the lemma statement form an orthonormal basis for $\Vadjc$. For $\tau,\tau'\in\bsq$ we have
\begin{align}
\inp{v^{[\mathrm{adj}]}_\tau}{v^{[\mathrm{adj}]}_{\tau'}}&=\frac{1}{2|\bs{C}_\tau|} \sum_{\sigma\in \bs{C}_\tau}\sum_{\hat{\sigma} \in  \bs{C}_{\tau'}} \inp{S_{\sigma, \sigma\cdot\tau}}{S_{\hat{\sigma}, \tau'\cdot\hat{\sigma}}}\\
&= \frac{1}{2|\bs{C}_\tau|} \sum_{\sigma\in \bs{C}_\tau}\sum_{\hat{\sigma} \in  \bs{C}_{\tau'}}\inp{\sigma}{\hat{\sigma}}\inp{\sigma\cdot\tau}{\hat{\sigma}\tau'} \notag\\&\hspace{20mm}+ \frac{1}{2|\bs{C}_\tau|} \sum_{\sigma\in \bs{C}_\tau}\sum_{\hat{\sigma} \in  \bs{C}_{\tau'}}\inp{\sigma \cdot\tau}{\hat{\sigma}}\inp{\sigma}{\tau'\cdot\hat{\sigma}}\\
&= \frac{1}{2|\bs{C}_\tau|} \sum_{\sigma\in \bs{C}_\tau}\sum_{\hat{\sigma} \in  \bs{C}_{\tau'}} \delta_{\tau,\tau'}\delta_{\sigma,\hat{\sigma}} +\delta_{\tau,\tau'} \frac{1}{2|\bs{C}_\tau|} \sum_{\sigma\in \bs{C}_\tau}\sum_{\hat{\sigma} \in  \bs{C}_{\tau}\cap \bs{C}_\sigma} \inp{\sigma\cdot\hat{\sigma}}{\tau}^2  \\
&= \frac{1}{2}\delta_{\tau, \tau'} + \delta_{\tau,\tau'}\frac{1}{|\bs{C}_\tau|} \sum_{\sigma\in \bs{C}_\tau}\sum_{\hat{\sigma}\in \bs{C}_\tau}\delta_{\sigma,\hat{\sigma}}\\
& = \delta_{\tau,\tau'},
\end{align}
where we obtained the second to last equality by using \cref{lemma:set sizes} and noting that $\inp{\sigma\cdot\sigma_0}{\tau}=0$ if $\sigma\in C_\tau$. We can make a similar argument for the vectors spanning $\Vadja$. Now since $|V_{\mathrm{r}}| = |\Vadjc|$ we can construct the isomorphism 
\begin{equation}
\theta:V_\mathrm{r}\to \Vadjc:\sigma_0\tau \mapsto v_\tau^{[\mathrm{adj}]}.
\end{equation}
We can check that this isomorphism commutes with the action of $\varphi\tn{2}$. We have for all $\tau \in \bsq$
\begin{align}
\theta(\varphi\tn{2}(C)\sigma_0\tau) &= \frac{1}{\sqrt{2|\bs{C}_\tau|}}\sum_{\sigma\in \bs{C}_{C\tau C\ct}} S_{\sigma, \sigma \cdot C\tau C\ct}\\
&=\frac{1}{\sqrt{2|\bs{C}_\tau|}}\sum_{C\ct\sigma C\in \bs{C}_{\tau}} S_{\sigma, \sigma \cdot C\tau C\ct}\\
&=\frac{1}{\sqrt{2|\bs{C}_\tau|}}\sum_{\sigma\in \bs{C}_{\tau }} S_{C \sigma C\ct, C\sigma C\ct\cdot C\tau C\ct}\\
&= \varphi\tn{2}(C)\theta(\sigma_0\tau),
\end{align}
for all $C\in \C_q$. This means that the spaces $V_{\mathrm{r}}$ and $\Vadjc$ carry equivalent subrepresentations of $\varphi\tn{2}$. We can make the same argument for 
$\Vadja$ and hence $\Vadjc, \Vadja,V_\mathrm{r},V_\mathrm{l}$ carry equivalent 
irreducible representations.
\end{proof}
Now we turn our attention to the antisymmetric sector, i.e. the spaces $\Vasc, \Vasa$ as defined in \cref{spaces}, where we can formulate the following lemma.


\begin{lemma}[Antisymmetric sector]\label{lemma:antisymmetric sector}
Take the space $\Vasa$ as defined in \cref{spaces} and note that it contains the space $\Vadja$ (defined in \cref{lemma:adjoint representations}). Denote the orthogonal complement of $\Vadja$ in $\Vasa$ as $\Vadjap$. We have that the subrepresentations of $\varphi\tn{2}$ carried by $\Vasc$ and $\Vadjap$ are equivalent.
\end{lemma}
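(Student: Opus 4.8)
The plan is to prove the equivalence by exhibiting a single Clifford-equivariant isomorphism between $\Vasc$ and $\Vadjap$, in the same spirit as the intertwiner $\theta$ built in \cref{lemma:adjoint representations}. Since every $\varphi\tn{2}(C)$ is a signed permutation of the orthonormal basis $\bc B$, it is unitary, so the orthogonal complement of a subrepresentation is again a subrepresentation; in particular $\Vadjap$ carries a subrepresentation of $\varphi\tn{2}$, and by \cref{lemma:subrepresentations} so does $\Vasc$. A first consistency check is that the dimensions agree: using $|\bs{N}_\tau| = \tfrac{d^2}{2}$ and $|\bs{C}_\tau| = |\bs{\hat C}_\tau| - 2 = \tfrac{d^2}{2} - 2$ one finds $\dim\Vasc = \tfrac12(d^2-1)\bigl(\tfrac{d^2}{2}-2\bigr) = (d^2-1)\tfrac{d^2-4}{4}$, while $\dim\Vadjap = \dim\Vasa - \dim\Vadja = \tfrac12(d^2-1)\tfrac{d^2}{2} - (d^2-1)$, which is the same number.

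First I would construct an explicit equivariant linear map $\Phi : \Vasc \to \Vasa$ and then argue that its image is exactly $\Vadjap$. The key structural obstacle is that, unlike the adjoint lemma, no ``local'' algebraic map can do this: the normalized product $\sigma\cdot\tau$ of two commuting Paulis commutes with both of them, so any map built from products alone stays inside the commuting sector and can never reach the anti-commuting one. Hence $\Phi$ must be a \emph{global} map that averages over Paulis, of the form $\Phi(A_{\sigma,\tau}) = \sum_{\nu} c(\sigma,\tau;\nu)\, A_{f(\sigma,\tau,\nu),\,g(\sigma,\tau,\nu)}$, whose summands are anti-commuting antisymmetric vectors (for instance built from products such as $\nu\cdot\sigma$, $\nu\cdot\tau$ with $\nu$ ranging over a commutation-defined subset of $\bsq$). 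Equivariance of such a map is automatic provided the coefficients depend only on the mutual commutation relations of $\sigma,\tau,\nu$, because the Clifford action preserves (anti-)commutation (\cref{lemma:subrepresentations}), intertwines with the normalized product, and acts transitively on $\bsq$; concretely one verifies $\Phi\,\varphi\tn2(C) = \varphi\tn2(C)\,\Phi$ by the same index-substitution $\nu \mapsto C\ct\nu C$ used in \cref{lemma:adjoint representations}.

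The heart of the argument is then to show that this $\Phi$ is injective and that $\operatorname{im}\Phi = \Vadjap$. For the latter it suffices to show $\operatorname{im}\Phi \perp \Vadja$, i.e. $\inp{\Phi(A_{\sigma,\tau})}{v^{\{\mathrm{adj}\}}_{\tau'}} = 0$ for all $\tau'\in\bsq$ and all commuting pairs $(\sigma,\tau)$; together with the dimension equality above this forces $\operatorname{im}\Phi = \Vadjap$ and hence, being injective, gives the isomorphism. Both the injectivity and the orthogonality reduce to evaluating overlaps of the averaged vectors, which are counted by the intersection cardinalities of the sets $\bs{N}_\tau, \bs{\hat C}_\tau$ computed in \cref{lemma:set sizes}. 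I expect this combinatorial bookkeeping---choosing the summation set and coefficients so that $\Phi$ is simultaneously equivariant, injective, and orthogonal to the adjoint vectors---to be the main obstacle.

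As an alternative that sidesteps constructing an explicit map, one can argue via characters. The operator-multiplication map $m:\sigma\otimes\tau\mapsto\sigma\tau$ is equivariant, annihilates $\Vasc$ (since $[\sigma,\tau]=0$ on the commuting sector), and restricts on $\Vasa$ to a surjection onto $V_{\mathrm{adj}}$ with kernel $\Vadjap$ and with $m(v^{\{\mathrm{adj}\}}_\tau)\propto\tau$; this re-proves $\Vasa \cong \Vadja \oplus \Vadjap$ with $\Vadja\cong V_{\mathrm{adj}}$. It then remains only to establish the character identity $\chi_{\Vasa} - \chi_{\Vasc} = \chi_{\mathrm{adj}}$, after which equality of characters yields $\Vasc\cong\Vadjap$. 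Since $\chi_{\Vasc}+\chi_{\Vasa}$ is the antisymmetric-square character $\tfrac12(\chi_{\mathrm{adj}}(C)^2 - \chi_{\mathrm{adj}}(C^2))$, the identity is equivalent to evaluating $\chi_{\Vasa}-\chi_{\Vasc} = \operatorname{tr}\!\bigl(B\,\varphi\tn2(C)|_{\Lambda^2 V_{\mathrm{adj}}}\bigr)$, where $B$ acts on $\sigma\otimes\tau$ by the commutation phase $(-1)^{\beta(\sigma,\tau)}$ and commutes with the Clifford action; this trace can be computed from the structure of the set of Paulis fixed up to sign by $C$. Either route works, but I would pursue the explicit intertwiner since it matches the constructive style used elsewhere in the paper.
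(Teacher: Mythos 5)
Your overall strategy is the one the paper uses: observe that $\Vadjap$ carries a subrepresentation (the paper cites Maschke's lemma, you use unitarity of the signed-permutation action, which is equivalent), check $\dim\Vasc=\dim\Vadjap$ (your computation is correct), and then exhibit a single equivariant map from $\Vasc$ into $\Vasa$ whose image is orthogonal to $\Vadja$ and which has trivial kernel, with equivariance checked by the index substitution $\hat{\sigma}\mapsto C\ct\hat{\sigma}C$ and all overlaps evaluated through the intersection cardinalities of \cref{lemma:set sizes}. These are exactly the steps of the paper's proof.

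However, there is a genuine gap: you never actually construct the intertwiner. You write down an ansatz family $\Phi(A_{\sigma,\tau})=\sum_\nu c(\sigma,\tau;\nu)A_{f(\sigma,\tau,\nu),g(\sigma,\tau,\nu)}$ and correctly list the constraints it must satisfy, but you explicitly defer the choice of summation sets and coefficients as ``the main obstacle'' --- and that choice is the heart of the proof, not bookkeeping. The paper's map is
\begin{equation*}
\Theta:\Vasc\to\Vasa:\ A_{\sigma,\sigma\cdot\tau}\ \mapsto\ \sum_{\hat{\sigma}\in\bs{N}_\tau\cap\bs{C}_\sigma}A_{\hat{\sigma},i\hat{\sigma}\cdot\tau}\ -\ \sum_{\hat{\sigma}\in\bs{N}_\tau\cap\bs{N}_\sigma}A_{\hat{\sigma},i\hat{\sigma}\cdot\tau},
\qquad \sigma\in\bs{C}_\tau,\ \tau\in\bsq,
\end{equation*}
i.e.\ one sums over all $\hat{\sigma}$ anti-commuting with $\tau$, with sign $\pm1$ according to whether $\hat{\sigma}$ commutes or anti-commutes with $\sigma$ (note also that the summand $A_{\hat{\sigma},i\hat{\sigma}\cdot\tau}$ is invariant under $\hat{\sigma}\to-\hat{\sigma}$, which your generic $f,g$ need not be). The sign structure is essential and cannot be fudged: within your own ansatz, taking all coefficients $+1$ gives $\sum_{\hat{\sigma}\in\bs{N}_\tau}A_{\hat{\sigma},i\hat{\sigma}\cdot\tau}\propto v^{\{\mathrm{adj}\}}_\tau$, a map that sends every basis vector $A_{\sigma,\sigma\cdot\tau}$ with the same $\tau$ to the \emph{same} vector lying \emph{inside} $\Vadja$ --- the exact opposite of injective and orthogonal to $\Vadja$. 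So the properties you require of $\Phi$ are not generic features of the ansatz; they hold only for the particular difference-of-sums above, which your proposal does not identify. Your alternative character route is logically sound (equality of characters does imply equivalence, and $m$ does split off $\Vadja\cong V_{\mathrm{adj}}$), but it has the same status: the identity $\chi_{\Vasa}-\chi_{\Vasc}=\chi_{\mathrm{adj}}$ is asserted, not proved. As written, neither route is a complete proof.
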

\begin{proof}
Note that $\Vadjap$ carries a subrepresentation of $\varphi\tn{2}$ by Maschke's lemma~\cite{Fulton2004} since $\Vadja$ and $\Vasa$ carry subrepresentations. We will prove that the representations carried by $\Vadjap$ and $\Vasc$ are equivalent by constructing an isomorphism between them that commutes with the action of $\varphi\tn{2}$. Note first that we can write down the following orthogonal basis for $\Vasc$ as
\begin{equation}
\Vasc = \vsp\{A_{\sigma,\sigma\cdot \tau} \;\;\|\;\; \sigma \in \bs{C}_\tau,\;\;\tau \in \bsq\}.
\end{equation}
Now consider the following linear map (defined as the linear extension of its action on the basis defined above) between $\Vasc$ and $\Vasa$.
\begin{equation}
\Theta: \Vasc \to \Vasa: A_{\sigma, \sigma \cdot \tau} \mapsto \sum_{\hat{\sigma}\in \bs{N}_\tau\cap \bs{C}_\sigma} A_{\hat{\sigma}, i\hat{\sigma}\cdot\tau} - \sum_{\hat{\sigma}\in \bs{N}_\tau\cap \bs{N}_\sigma} A_{\hat{\sigma}, i\hat{\sigma}\cdot\tau}
\end{equation}
for all $\sigma \in \bs{C}_\tau,\;\;\tau \in \bsq$. We now argue that the image of $\Theta$ is orthogonal to the space $\Vadja$. We do this by direct calculation. For all $\nu \in \bsq $ and all $\sigma \in \bs{C}_\tau,\;\;\tau \in \bsq$ we can calculate
\begin{align}
\sqrt{2|\bs{N}_\tau|}\inp{v^{\{\mathrm{adj}\}}_{\nu}}{\Theta\big(A_{\sigma,\sigma\cdot\tau}\big)}&= \sum_{\sigma'\in \bs{N}_\nu}\sum_{\hat{\sigma}\in \bs{N}_\tau\cap \bs{C}_\sigma} \inp{A_{\sigma',i\sigma'\cdot\nu}}{A_{\hat{\sigma},i\hat{\sigma}\cdot \tau}} - \sum_{\sigma'\in \bs{N}_\nu}\sum_{\hat{\sigma}\in \bs{N}_\tau\cap \bs{N}_\sigma} \inp{A_{\sigma',i\sigma'\cdot\nu}}{A_{\hat{\sigma},i\hat{\sigma}\cdot \tau}}\\
&=\sum_{\sigma'\in \bs{N}_\nu}\sum_{\hat{\sigma}\in \bs{N}_\tau\cap \bs{C}_\sigma}(\delta_{\sigma',\hat{\sigma}}+\delta_{\sigma',i\hat{\sigma}\cdot\tau}) \delta_{\nu,\tau} \notag\\
&\hspace{25mm}- \sum_{\sigma'\in \bs{N}_\nu}\sum_{\hat{\sigma}\in \bs{N}_\tau\cap \bs{N}_\sigma}(\delta_{\sigma',\hat{\sigma}}+\delta_{\sigma',i\hat{\sigma}\cdot\tau})\delta_{\nu,\tau}\\
&= 2\left(|\bs{N}_\nu \cap \bs{N}_{\tau}\cap \bs{C}_{\sigma}| - |\bs{N}_\nu\cap \bs{N}_{\tau}\cap \bs{N}_{\sigma}|\right)\delta_{\tau,\nu}\\
&= 2\left[|\bs{N}_{\tau}\cap \bs{C}_{\sigma}| - | \bs{N}_{\tau}\cap \bs{N}_{\sigma}|\right]\delta_{\tau,\nu}\\
&=0
\end{align}
where in the last line we used  \cref{lemma:set sizes} and $|\bs{N}_{\tau}\cap \bs{C}_{\sigma}| = |\bs{N}_{\tau}\cap \bs{\hat{C}}_{\sigma}|$ if $\sigma\in \bs{C}_\tau$. This means that $\mathrm{Im}(\Theta) \subset \Vadjap$. We now argue that $\mathrm{Im}(\Theta) = \Vadjap$. We first note that $|\Vadjap| = |\Vasc|$. Furthermore we can show that $\Theta$ preserves orthogonality under the Hilbert-Schmidt inner product and that $\mathrm{Ker}(\Theta)=0$. By direct calculation we have for all $\tau, \tau'\in \bsq$ and $\sigma\in \bs{C}_\tau, \sigma'\in \bs{C}_{\tau'}$
\begin{align}
\inp{\Theta\big(A_{\sigma,\sigma\cdot\tau}\big)}{\Theta\big(A_{\sigma',\sigma'\cdot\tau'}\big)}&= \sum_{\substack{\hat{\sigma}\in \bs{N}_\tau\cap \bs{C}_\sigma\\\hat{\sigma}'\in \bs{N}_{\tau'}\cap \bs{C}_{\sigma'}}} \inp{A_{\hat{\sigma},i\hat{\sigma}\cdot\tau}}{A_{\hat{\sigma}',i\hat{\sigma}'\cdot \tau'}} -\sum_{\substack{\hat{\sigma}\in \bs{N}_\tau\cap \bs{N}_\sigma\\\hat{\sigma}'\in \bs{N}_{\tau'}\cap \bs{C}_{\sigma'}}} \inp{A_{\hat{\sigma},i\hat{\sigma}\cdot\tau}}{A_{\hat{\sigma}',i\hat{\sigma}'\cdot \tau'}}\notag\\&\hspace{15mm} -\sum_{\substack{\hat{\sigma}\in \bs{N}_\tau\cap \bs{C}_\sigma\\\hat{\sigma}'\in \bs{N}_{\tau'}\cap \bs{N}_{\sigma'}}} \inp{A_{\hat{\sigma},i\hat{\sigma}\cdot\tau}}{A_{\hat{\sigma}',i\hat{\sigma}'\cdot \tau'}} + \sum_{\substack{\hat{\sigma}\in \bs{N}_\tau\cap \bs{N}_\sigma\\\hat{\sigma}'\in \bs{N}_{\tau'}\cap \bs{N}_{\sigma'}}} \inp{A_{\hat{\sigma},i\hat{\sigma}\cdot\tau}}{A_{\hat{\sigma}',i\hat{\sigma}'\cdot \tau'}}\\
&= \sum_{\substack{\hat{\sigma}\in \bs{N}_\tau\cap \bs{C}_\sigma\\\hat{\sigma}'\in \bs{N}_{\tau'}\cap \bs{C}_{\sigma'}}} (\delta_{\hat{\sigma}',\hat{\sigma}}+\delta_{\hat{\sigma}',i\hat{\sigma}\cdot\tau})\delta_{\tau,\tau'} -\sum_{\substack{\hat{\sigma}\in \bs{N}_\tau\cap \bs{N}_\sigma\\\hat{\sigma}'\in \bs{N}_{\tau'}\cap \bs{C}_{\sigma'}}}(\delta_{\hat{\sigma}',\hat{\sigma}}+\delta_{\hat{\sigma}',i\hat{\sigma}\cdot\tau})\delta_{\tau,\tau'} \notag\\&\hspace{10mm}-\sum_{\substack{\hat{\sigma}\in \bs{N}_\tau\cap \bs{C}_\sigma\\\hat{\sigma}'\in \bs{N}_{\tau'}\cap \bs{N}_{\sigma'}}} (\delta_{\hat{\sigma}',\hat{\sigma}}+\delta_{\hat{\sigma}',i\hat{\sigma}\cdot\tau})\delta_{\tau,\tau'} + \sum_{\substack{\hat{\sigma}\in \bs{N}_\tau\cap \bs{N}_\sigma\\\hat{\sigma}'\in \bs{N}_{\tau'}\cap \bs{N}_{\sigma'}}} (\delta_{\hat{\sigma}',\hat{\sigma}}+\delta_{\hat{\sigma}',i\hat{\sigma}\cdot\tau})\delta_{\tau,\tau'}\\
&= \bigg(|\bs{N}_\tau \cap \bs{C}_{\sigma}\cap \bs{C}_{\sigma'}| - |\bs{N}_\tau \cap \bs{C}_{\sigma}\cap \bs{N}_{\sigma'}|\notag\\&\hspace{30mm} -|\bs{N}_\tau \cap \bs{N}_{\sigma}\cap \bs{C}_{\sigma'}| +|\bs{N}_\tau \cap \bs{N}_{\sigma}\cap \bs{N}_{\sigma'}|\bigg)\delta_{\tau,\tau'}\label{eq:four_sets}
\end{align}
To further evaluate this expression we use the following fact. Let $\nu\in \bsqh$ such that $\sigma\cdot \sigma' \propto \nu$ (note that this implies that $\nu \in \bs{\hat{C}}_\tau$) . We then have
\begin{align}
\forall \mu \in \bsq:\;\;\;\; \mu \in \bs{C}_{\nu}\iff \mu \in (\bs{C}_\sigma\cap \bs{C}_{\sigma'}) \cup (\bs{N}_\sigma\cap \bs{N}_{\sigma'})\\
\forall \mu \in \bsq:\;\;\;\; \mu \in \bs{N}_{\nu}\iff \mu \in (\bs{C}_\sigma\cap \bs{N}_{\sigma'}) \cup (\bs{C}_\sigma\cap \bs{N}_{\sigma'}).
\end{align}
We use this together with the fact that $\bs{C}_\tau\cap \bs{N}_\tau =\emptyset$ for all $\tau \in \bsq$ to reduce \cref{eq:four_sets} to
\begin{align}
\left(|\bs{N}_\tau \cap \bs{C}_{\nu}| -|\bs{N}_\tau \cap \bs{N}_{\nu}| \right)\delta_{\tau,\tau'} = \left(\frac{d^2}{2}-1\right)\delta_{\tau,\tau'}(\delta_{\sigma,\sigma'}+ \delta_{\sigma,i\sigma'\cdot \tau})
\end{align}
where in the last equality we used \cref{lemma:set sizes} together with $\sigma\cdot \sigma' \propto \nu$ and that $\bs{N}_\tau\cap \bs{\hat{C}}_\nu = \bs{N}_\tau\cap \bs{C}_\nu$ if $\nu\in \bs{C}_\tau$ and that $\bs{C}_\nu = \bsq$ if $\nu=\sigma_0$ which occurs if and only if $\sigma = \sigma'$. Since $\inp{A_{\sigma,i\sigma\cdot\tau}}{A_{\sigma',\sigma'\cdot\tau'}} = \delta_{\tau,\tau'}(\delta_{\sigma,\sigma'}+ \delta_{\sigma,i\sigma'\cdot \tau}) $ this means that $\Theta$ preserves orthogonality and that $\mathrm{Ker}(\Theta)=0$. Together with the fact that $|\Vadjap| = |\Vasc|$ this implies that $\mathrm{Im}(\Theta)=\Vadjap$. This means we can restrict $\Theta$ to an isomorphism from $\Vasc$ to $\Vadjap$. We will abuse notation and refer to this isomorphism as $\Theta$ as well.\\

To prove that the representations carried by $\Vasc$ and $\Vadjap$ are equivalent we now still have to argue that $\Theta$ commutes with $\varphi\tn{2}$. We can do this by direct calculation. For all $\tau\in \bsq$ and $\sigma \in \bs{C}_\tau$ and $C\in \C_q$ we have
\begin{align}
\Theta(\varphi\tn{2}(C)(A_{\sigma,\sigma\cdot\tau})) &= \Theta(A_{C\sigma C \ct,C\sigma C\ct \cdot C \tau C\ct})\\
& = \sum_{\hat{\sigma}\in \bs{N}_{C\tau C\ct}\cap \bs{C}_{C\sigma C\ct}} A_{\hat{\sigma}, i \hat{\sigma}\cdot C \tau C\ct} - \sum_{\hat{\sigma}\in \bs{N}_{C\tau C\ct}\cap \bs{N}_{C\sigma C\ct}} A_{\hat{\sigma}, i \hat{\sigma}\cdot C \tau C\ct}\\
&=\sum_{C\ct \hat{\sigma}C\in \bs{N}_{\tau}\cap \bs{C}_{\sigma}} A_{\hat{\sigma}, i \hat{\sigma}\cdot C \tau C\ct} - \sum_{C\ct\hat{\sigma}C\in \bs{N}_{\tau}\cap \bs{N}_{\sigma }} A_{\hat{\sigma}, i \hat{\sigma}\cdot C \tau C\ct}\\
&=\sum_{\hat{\sigma}\in \bs{N}_{\tau}\cap \bs{C}_{\sigma}} A_{C\hat{\sigma}C\ct, i C\hat{\sigma} C\ct \cdot C \tau C\ct} - \sum_{\hat{\sigma}\in \bs{N}_{\tau}\cap \bs{N}_{\sigma }} A_{C\hat{\sigma}C\ct, i C\hat{\sigma}C\ct\cdot C \tau C\ct}\\
&=\varphi\tn{2}(C) \left(\sum_{\hat{\sigma}\in \bs{N}_{\tau}\cap \bs{C}_{\sigma}} A_{\hat{\sigma}, i \hat{\sigma}  \cdot  \tau } - \sum_{\hat{\sigma}\in \bs{N}_{\tau}\cap \bs{N}_{\sigma }} A_{\hat{\sigma}, i \hat{\sigma}\cdot  \tau }\right)\\
&=\varphi\tn{2}(C) \left(\Theta(A_{\sigma, \sigma\cdot \tau})\right).
\end{align}
This proves the equivalence of the subrepresentations carried by $\Vasc$ and $\Vadjap$.
 \end{proof}
Note that we have not proven that the subrepresentations carried by $\Vasc$~and $\Vadjap$ are irreducible. We will get this irreducibility for free when proving \cref{theorem:two-copy representation}.\\

Next up are the symmetric sectors. In order to facilitate the analysis of these spaces we begin by proving the following technical lemma. This technical lemma allows us to draw conclusions about the subrepresentations of $\varphi\tn{2}$ carried by subspaces of $\Vsc$ and $\Vsa$ by considering the action of a strict subgroup of the Clifford group $\C_q$ on particular subspaces of $\Vsc$ and $\Vsa$.

\begin{lemma}[space reduction]\label{lemma:space reduction}
For every $\tau\in \bsq$ define a subgroup $\C_q^\tau$ of $\C_q$ as
\begin{equation}
\C_q^\tau:= \{C\in \C_q\;\;\|\;\; C\tau C\ct = \pm \tau\}.
\end{equation}
Also define subspace V = $\vsp\{\;\sigma\hat{\sigma} \;\;\|\;\; \sigma, \hat{\sigma} \in \bsq,\;\; \sigma\neq\hat{\sigma}\}\subset \mc{M}_d\tn{2}$ and for all $\tau \in \bsq$ define the subspace 
\begin{equation}
V^\tau := \vsp\{\;\sigma_\tau \hat{\sigma}_\tau\;\;\|\;\; \sigma_\tau,\hat{\sigma}_\tau \in\bsq,\;\;\sigma_\tau\cdot \hat{\sigma}_\tau\propto \tau\}.
\end{equation}
The first claim of the lemma is:
\begin{itemize}
	\item The space $V$ decomposes with respect to $V^\tau$, that is
\begin{equation} 
V = \bigoplus_{\tau'\in \bsq}V^{\tau'}.
\end{equation}
\end{itemize}
Now assume that for some $\tau \in \bsq$ there exists a subspace $W^\tau$ of $V^\tau$ such that 
\begin{equation}
\varphi\tn{2}(\hat{C}) W^\tau \subset W^\tau, \;\;\;\;\; \forall \hat{C}\in \C_q^\tau.
\end{equation}
The second claim of the lemma is:
\begin{itemize}
\item For all $\tau'\in \bsq$ there exist $W^{\tau'} \subset V^{\tau'}$ such that $W^\tau$ and $W^{\tau'}$ are isomorphic and that
\begin{equation}
\varphi\tn{2}(C) W \subset W, \;\;\;\;\; \forall C\in \C_q,
\end{equation}
with 
\begin{equation}
W:= \bigoplus_{\tau'\in \bsq} W^{\tau'}.
\end{equation}
\end{itemize}
\end{lemma}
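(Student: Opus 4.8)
The plan is to treat the two claims in order, first establishing the orthogonal decomposition $V=\bigoplus_{\tau'} V^{\tau'}$ and then bootstrapping the given $\C_q^\tau$-invariance on the single block $V^\tau$ up to full $\C_q$-invariance of the assembled space $W$. For the first claim I would argue purely at the level of basis vectors. Each spanning vector $\sigma\hat{\sigma}$ of $V$ has $\sigma\neq\hat{\sigma}$ with $\sigma,\hat{\sigma}\in\bsq$, so its normalized product $\sigma\cdot\hat{\sigma}$ is proportional to a \emph{unique} non-identity Pauli $\tau'\in\bsq$; it cannot be proportional to $\sigma_0$, since that would force $\sigma=\hat{\sigma}$. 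Hence the orthonormal basis $\{\sigma\hat{\sigma}\}$ of $V$ partitions into the defining bases of the spaces $V^{\tau'}$, and since each $V^{\tau'}$ is spanned by a disjoint subset of one orthonormal basis, these subspaces are mutually orthogonal. The orthogonal direct sum $V=\bigoplus_{\tau'\in\bsq}V^{\tau'}$ then follows immediately.

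For the second claim I would lean on two structural facts together with the transitivity of the Clifford action on $\bsq$. The first fact is that $\varphi\tn{2}(C)V^{\tau'}=V^{C\tau' C\ct}$, which follows from $(C\sigma C\ct)\cdot(C\hat{\sigma}C\ct)=C(\sigma\cdot\hat{\sigma})C\ct$; the second is $V^{-\tau'}=V^{\tau'}$, since the defining condition only involves proportionality to $\tau'$. Concretely, for each $\tau'$ I would fix (using transitivity) a Clifford element $C_{\tau'}$ with $C_{\tau'}\tau C_{\tau'}\ct=\pm\tau'$, taking $C_\tau=\id$, and define $W^{\tau'}:=\varphi\tn{2}(C_{\tau'})W^\tau$. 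Because $\varphi\tn{2}(C_{\tau'})V^\tau=V^{\pm\tau'}=V^{\tau'}$, each $W^{\tau'}$ lands inside $V^{\tau'}$, and because $\varphi\tn{2}(C_{\tau'})$ is an invertible (signed-permutation) map, $W^{\tau'}$ is isomorphic to $W^\tau$.

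The crux—and the step where the hypothesis of $\C_q^\tau$-invariance is indispensable—is showing that $W^{\tau'}$ is independent of the choice of $C_{\tau'}$. If $C,C'$ both send $\tau$ to $\pm\tau'$, then $C'\ct C$ sends $\tau$ to $\pm\tau$, so $C'\ct C\in\C_q^\tau$; invariance gives $\varphi\tn{2}(C'\ct C)W^\tau\subset W^\tau$, and applying the same to the inverse element $C\ct C'\in\C_q^\tau$ forces equality $\varphi\tn{2}(C'\ct C)W^\tau=W^\tau$, whence $\varphi\tn{2}(C)W^\tau=\varphi\tn{2}(C')W^\tau$. Once well-definedness is in place, $\C_q$-invariance of $W=\bigoplus_{\tau'}W^{\tau'}$ is a one-line group-theoretic computation: for any $C\in\C_q$ and any $\tau'$, the product $CC_{\tau'}$ sends $\tau$ to $\pm C\tau' C\ct=:\pm\tau''$, so by well-definedness $\varphi\tn{2}(C)W^{\tau'}=\varphi\tn{2}(CC_{\tau'})W^\tau=W^{\tau''}\subset W$, and summing over $\tau'$ yields $\varphi\tn{2}(C)W\subset W$. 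I expect the only genuine obstacle to be the well-definedness argument, which is exactly where the subgroup hypothesis does its work; the invariance of $W$ and the dimension/isomorphism statements are then essentially formal.
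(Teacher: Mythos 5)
Your proposal is correct and follows essentially the same route as the paper: both decompose $V$ by partitioning the basis according to which Pauli the product $\sigma\cdot\hat{\sigma}$ is proportional to, then transport $W^\tau$ to the other blocks via Clifford elements sending $\tau$ to $\pm\tau'$, with the same well-definedness argument (using that $C'^\dagger C\in\C_q^\tau$ and its inverse both preserve $W^\tau$, forcing equality of images) doing the real work. Your final invariance step is in fact stated slightly more explicitly than the paper's terse concluding computation, but the underlying argument is identical.
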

\begin{proof}
Note first that $\cup_{\tau' \in \bsq}V^{\tau'} = V$ and also for $\tau, \tau'\in \bs{\sigma}_q$ we have for all $\sigma_\tau \hat{\sigma}_\tau \in V^\tau$, $\sigma_{\tau'} \hat{\sigma}_{\tau'} \in V^{\tau'}$ that
\begin{equation}
\inp{\sigma_\tau \hat{\sigma}_\tau}{\sigma_{\tau'} \hat{\sigma}_{\tau'}} = \delta_{\sigma_\tau,\sigma_{\tau'}}  \delta_{\hat{\sigma}_\tau,\hat{\sigma}_{\tau'}}  = \delta_{\sigma_\tau,\sigma_{\tau'}}\delta_{\tau,\tau'},
\end{equation}
since if $\sigma_\tau = \sigma_{\tau'}$ we must have $(\hat{\sigma}_\tau = \hat{\sigma}_{\tau'} \iff \tau = \tau')$. This immediately implies
\begin{equation}
V = \bigoplus_{\tau'\in \bsq}V^{\tau'}.
\end{equation}
This proves the first claim of the lemma.\\

Now assume that there exists a $\tau \in \bsq$ such that there is a subspace $W^\tau\subset V^\tau$ such that for all $\hat{C}\in \C_q^\tau$ we have $\varphi\tn{2}(\hat{C})W^\tau\subset W^\tau$. For all $\tau'\in \bsq$ we can define the following subset $S_{\tau'}$ of $\C_q$:
\begin{equation}
S_{\tau'}:=\{C\in \C_q\;\;\|\;\; C\tau C\ct = \pm \tau'\}.
\end{equation}
Because the $\C_q$ acts transitively on $\bsq$ this set is never empty.
Now for every $C\in S_{\tau'}$ we can define the subspace $W^{C}$ as
\begin{equation}
W^{C,\tau'}:= \{\varphi\tn{2}(C)v\;\;\|\;\;v\in W^\tau\}.
\end{equation}
Note that for every $C\in S_{\tau'}$ we have $W^{C,\tau'}\subset V^{\tau'}$. We also have for $C_1,C_2 \in S_{\tau'}$ that 
\begin{align}
C_1\ct C_2 &\in \C_q^\tau,\\
C_2\ct C_1 &\in \C_q^\tau.
\end{align}
The first equation implies that 
\begin{equation}
\varphi\tn{2}(C_1\ct)\varphi\tn{2}(C_2)W^\tau \subset W^\tau \implies \varphi\tn{2}(C_1\ct)W^{C_2,\tau'} \subset W^\tau,
\end{equation}
which we can left-multiply by $\varphi\tn{2}(C_1)$ to get 
\begin{equation}
W^{C_2,\tau'}\subset W^{C_1,\tau'}.
\end{equation}
We can repeat this reasoning with $C_2,C_1$ interchanged to obtain $W^{C_2,\tau'}\subset W^{C_1,\tau'}$ and thus
\begin{equation}
W^{C_1,\tau'} = W^{C_2,\tau'},\;\;\;\; \forall C_1,C_2 \in S_{\tau'},
\end{equation}
for all $\tau'$. Let us label this single subspace by $W^{\tau'}$. Note that since $W^\tau\subset V^\tau$ for all $\tau \in \bsq$ the spaces $W^\tau, W^{\tau'}$ are orthogonal for $\tau \neq \tau'$. Hence we can consider the space
\begin{equation}
W = \bigoplus_{\tau\in \bsq} W^\tau.
\end{equation}
Now take $w \in W$. We can write
\begin{equation}
w = \sum_{\tau\in \bsq} v^\tau,\;\;\;\;\; v^\tau \in W^\tau.
\end{equation}
Now for all $C\in \bs{C}_q$ and $\tau \in \bsq$ there exist unique vectors $u^{\tau'}\in V^{\tau'}$ with $\tau'  = \pm C\tau C\ct$ such that
\begin{equation}
\varphi\tn{2}(C)w = \sum_{\tau\in \bsq} \varphi\tn{2}(C)v^\tau = \sum_{\tau'\in\bsq} u^{\tau'}\in W,
\end{equation}
which proves the lemma.

\end{proof}

Next we turn our attention to the symmetric commuting sector i.e., the space $\Vsc$.  We will decompose this space by using a curious connection between the representation $\varphi\tn{2}$ of $\C_q$ on $\Vsc$ and the representation $\varphi\tn{2}$ of $\C_{q-1}$ (the Clifford group on $q-1$ qubits) on its diagonal sector $V_\mathrm{d}^{q-1}$. We have the following lemma.
\begin{lemma}[Symmetric commuting sector]\label{lemma:symmetric commuting sector}
Take the space $\Vsc$ as defined in \cref{spaces}, the space $\Vadjc$ as defined in \cref{lemma:adjoint representations} and define the spaces 
\begin{align}
V_{[1]} := \bigoplus_{\tau\in \bsq} V_{[1]}^\tau, \;\;\;\;\;\;\;V_{[2]} := \bigoplus_{\tau\in \bsq} V_{[2]}^\tau,
\end{align}
where for all $\tau\in \bsq$
\begin{align}
V_{[1]}^\tau &:=\vsp\left\{v^\tau \in \Vsc\;\;\|\;\; v^\tau = \sum_{\sigma\in \bs{N}_\tau}\lambda_\sigma S_{\sigma,i\sigma\cdot \tau},\;\; \sum_{\sigma\in \bs{C}_\tau\cap \bs{N}_\nu}\!\!\!\!\lambda_\sigma = -\frac{d}{4}\lambda_\nu,\;\;\; \forall \nu\in \bs{C}_\tau \right\},\\
V_{[2]}^\tau&:= \vsp\left\{v^\tau \in \Vsc\;\;\|\;\; v^\tau = \sum_{\sigma\in \bs{N}_\tau}\lambda_\sigma S_{\sigma,i\sigma\cdot \tau},\;\; \sum_{\sigma\in \bs{C}_\tau\cap \bs{N}_\nu}\!\!\!\!\lambda_\sigma  = \frac{d}{4}\lambda_\nu,\;\;\; \forall \nu\in \bs{C}_\tau\right\}.
\end{align}
Note that $V_{[1]},V_{[2]}$ and $\Vadjc$ are subspaces of $\Vsc$. We have the following:
\begin{itemize}
	\item For $q=1$ we have $\Vsc = 0$ and hence $V_{[1]}=V_{[2]}=\Vadjc=0$.
	\item For $q=2$ we have $\Vsc =\Vadjc \oplus V_{[1]}$ and $V_{[2]} =0$. The spaces $\Vadjc $ and $ V_{[1]}$ carry irreducible subrepresentations of $\varphi\tn{2}$.
	\item For $q\geq3$ have $\Vsc =\Vadjc \oplus V_{[1]}\oplus V_{[2]}$. The spaces $\Vadjc,V_{[1]} $ and $ V_{[2]}$ carry irreducible subrepresentations of $\varphi\tn{2}$.
\end{itemize}
\end{lemma}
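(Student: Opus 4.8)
The plan is to reduce the analysis of $\Vsc$ under $\C_q$ to that of the diagonal sector of $\C_{q-1}$, as foreshadowed, using \cref{lemma:space reduction}. First, the case $q=1$ is immediate: on one qubit the three non-identity Paulis pairwise anticommute, so $\bs{C}_\tau=\emptyset$ for every $\tau$ and all four spaces vanish. For $q\geq2$ I would fix a $\tau\in\bsq$ and note that $\Vsc\subset V$, being spanned by off-diagonal symmetric vectors; its natural basis in a fixed sector is $\Vsc\cap V^\tau = \vsp\{S_{\sigma,\sigma\cdot\tau}\;\|\;\sigma\in\bs{C}_\tau\}$, which is invariant under the stabilizer $\C_q^\tau$. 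The first claim of \cref{lemma:space reduction} then yields $\Vsc=\bigoplus_{\tau'\in\bsq}(\Vsc\cap V^{\tau'})$.

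The crux of the construction is to identify $\Vsc\cap V^\tau$, as a representation of $\C_q^\tau$, with the diagonal sector $V_\mathrm{d}^{q-1}$ of $\C_{q-1}$. For this I would use that the centralizer of $\tau$ in the Pauli group modulo $\langle\tau\rangle$ is isomorphic to the $(q-1)$-qubit Pauli group, that this quotient preserves (anti)commutation, and that $\C_q^\tau$ surjects onto $\C_{q-1}$ via its action on the quotient, the kernel acting trivially on $\Vsc\cap V^\tau$. Writing $\bar\sigma$ for the image of $\sigma\in\bs{C}_\tau$, the Paulis $\sigma$ and $\sigma\cdot\tau$ share the image $\bar\sigma$, so $S_{\sigma,\sigma\cdot\tau}\mapsto\bar\sigma\bar\sigma$ is a well-defined bijection onto the basis of $V_\mathrm{d}^{q-1}$ (the dimensions match, both being $\tfrac{d^2}{4}-1$), and a direct check shows it intertwines $\varphi\tn{2}$ restricted to $\C_q^\tau$ with $\varphi\tn{2}$ of $\C_{q-1}$.

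Next I would apply \cref{lemma:diagonal sector} for $q-1$ qubits, splitting $V_\mathrm{d}^{q-1}=V_0^{q-1}\oplus V_1^{q-1}\oplus V_2^{q-1}$ into irreducible inequivalent $\C_{q-1}$-subrepresentations, with $V_2^{q-1}=0$ exactly when $q-1=1$. Transporting back, the constant vector $\sum_{\sigma\in\bs{C}_\tau}S_{\sigma,\sigma\cdot\tau}$ corresponds to $V_0^{q-1}$ and is proportional to $v^{[\mathrm{adj}]}_\tau$, so it spans $\Vadjc\cap V^\tau$; and the sum conditions defining $V_1^{q-1},V_2^{q-1}$, after translating $\sum_{\bar\sigma\in\bs{N}_{\bar\nu}}$ into $\sum_{\bs{C}_\tau\cap\bs{N}_\nu}$ and using $2^{q-1}/2=d/4$, become precisely the conditions defining $V_{[1]}^\tau,V_{[2]}^\tau$. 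The second claim of \cref{lemma:space reduction}, applied to each of these three $\C_q^\tau$-pieces in turn, lifts them to $\C_q$-invariant subspaces $\Vadjc,V_{[1]},V_{[2]}$ with $\Vsc=\Vadjc\oplus V_{[1]}\oplus V_{[2]}$, giving the claimed decomposition (with the expected collapses at $q=1,2$).

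The remaining and hardest step is irreducibility. The piece $\Vadjc$ is already irreducible by \cref{lemma:adjoint representations}, so only $V_{[1]},V_{[2]}$ are at issue, and here the difficulty is genuine: \cref{lemma:space reduction} realizes these as representations induced from the little group $\C_q^\tau$, and induction of an irreducible need not stay irreducible. I see two routes. One is a projection argument: any nonzero $\C_q$-invariant $U\subseteq V_{[1]}$ must, by irreducibility of $V_{[1]}^\tau$ under $\C_q^\tau$ and transitivity of $\C_q$ on $\bsq$, project onto every component $V_{[1]}^{\tau'}$; ruling out a proper graph subspace then amounts to checking that $V_{[1]}^\tau$ and $V_{[1]}^{\tau'}$ have no common constituent over $\C_q^\tau\cap\C_q^{\tau'}$ for $\tau'\neq\pm\tau$, i.e.\ Mackey's criterion. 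The other is a character count proving $\inp{\chi_{\Vsc}}{\chi_{\Vsc}}=3$ for $q\geq3$ (and $2$ for $q=2$), which by \cref{Schur} forces all summands to be irreducible and pairwise inequivalent. Either way the obstacle is the same in spirit: controlling how the sectors $V^\tau$ for distinct $\tau$ interact under the full group, which the signed-permutation action makes delicate to track by orbit counting.
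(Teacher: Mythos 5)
Your proposal is correct and follows essentially the same route as the paper's proof: decompose $\Vsc$ into the sectors $\Vsc^\tau$ via \cref{lemma:space reduction}, identify each $\Vsc^\tau$ with the diagonal sector $V_\mathrm{d}^{q-1}$ of $\C_{q-1}$ (the paper does this concretely by choosing $\tau=\sigma_Z\sigma_0$ rather than through the abstract quotient, but the content is identical), apply \cref{lemma:diagonal sector} on $q-1$ qubits, and lift back with the translated constant $d/4$. Your handling of irreducibility also mirrors the paper's actual structure: the paper does not prove irreducibility under the full group within this lemma either, but obtains it from the global character count $\inp{\chi_{\varphi\tn{2}}}{\chi_{\varphi\tn{2}}}=30$ (resp.\ $29$) in \cref{theorem:two-copy representation}, which is precisely the global version of your second proposed route.
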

\begin{proof}
We begin the proof by noting that the statement is trivial for $q=1$. For $q\geq2$ we see that the space $\Vsc$ can be block decomposed in the following way
\begin{equation}
\Vsc = \bigoplus_{\tau\in \bsq} \Vsc^\tau,
\end{equation}
with 
\begin{equation}
\Vsc^{\tau} := \vsp\left\{ S_{\sigma, \sigma\cdot\tau}\;\;\|\;\;\sigma\in \bs{C}_\tau\right\},\;\;\;\;\;\;\;\;\tau\in \bsq.
\end{equation}
Using \cref{lemma:space reduction} we can, to find subspaces of $\Vsc$ carrying subrepresentations of $\varphi\tn{2}$, restrict ourselves to finding, for some $\tau \in \bsq$, subspaces of $\Vsc^\tau$ that are invariant under the representation $\varphi\tn{2}$ restricted the subgroup $\C_q^\tau\subset \C_q$ where $\C_q^\tau$ is defined as in \cref{lemma:space reduction}. For the purposes of this proof we choose $\tau =\sigma_Z\sigma_0$ where $\sigma_Z$ is the single qubit normalized Pauli element $Z/\sqrt{2}$. This means that we can write any $\hat{\sigma} \in \bs{C}_\tau$ as
\begin{equation}\label{eq:correspondence}
 \hat{\sigma} =\sigma_Z\sigma \hspace{2mm}\text{or}\hspace{2mm} \sigma_0\sigma,\hspace{3mm} \sigma \in \bs{\sigma}_{q-1},
\end{equation}
with $\bs{\sigma}_{q-1}$ the normalized, hermitian, non-identity Pauli elements on $q-1$ qubits. We also recall the definition of the diagonal sector on $q-1$ qubits:
\begin{equation}
V_\mathrm{d}^{q-1} := \vsp\{\;\sigma\sigma\;\;\|\;\;\;\sigma\in \bs{\sigma}_{q-1}\}.
\end{equation}
Since we have that
\begin{equation}\label{eq:symmetry}
S_{\sigma_0\sigma, \sigma_Z\sigma} = S_{\sigma_Z\sigma, \sigma_0\sigma}
\end{equation}
for all $\sigma\in \bs{\sigma}_{q-1}$ there is an isomorphism $\theta$ between the vector spaces $V_\mathrm{d}^{q-1}$ and $\Vsc^\tau$ of the form
\begin{equation}
\theta: V_\mathrm{d}^{q-1} \to \Vsc^\tau:\sigma\sigma \mapsto S_{\sigma_0\sigma, \sigma_Z\sigma}.
\end{equation}
Now consider the Clifford group on $q-1$ qubits, $\C_{q-1}$. It can be seen as a subgroup of the group $\C_q^\tau$ through the embedding
\begin{equation}
\hat{\theta}:\C_{q-1} \to \C_q^\tau: C \mapsto \id\otimes C.
\end{equation}
Now note that $\C_q^\tau$ preserves the commutation relations of the set $\bs{\sigma}_{q-1}$, that is, for all $\sigma, \hat{\sigma}\in \bs{\sigma}_{q-1}$ and $\sigma_1,\hat{\sigma}_1\in \{\sigma_0,\sigma_Z\}$ we have
\begin{equation}
\big[C(\sigma_1\sigma)C\ct,C(\hat{\sigma}_1\hat{\sigma})C\ct\big]=0\iff [\sigma_1\sigma,\hat{\sigma}_1\hat{\sigma}]=0 \iff [\sigma,\hat{\sigma}]=0
\end{equation}
for all $C\in \C_q^\tau$ with the same conclusion holding for the anti-commutator.
Now from this and \cref{eq:symmetry} one can see that for all $C\in \C_q^\tau$ there exists a $\hat{C} \in \C_{q-1}$ such that $\varphi\tn{2}(C)v =\varphi\tn{2}(\hat{\theta}(\hat{C}))v $ for all $v\in \Vsc^\tau$. This means that for any subspace $W$ of $\Vsc^\tau$ we have
\begin{equation}
\varphi\tn{2}(C)W\subset W,\;\; \forall C\in \C_q^\tau \iff \varphi\tn{2}(\hat{\theta}(\hat{C}))W\subset W,\;\; \forall\hat{C}\in \C_{q-1}
\end{equation}
Now let us consider the representation $\varphi\tn{2}$ of $\C_{q-1}$ on $q-1$ qubits. Let's label the restriction of this representation to $V_\mathrm{d}^{q-1}$ as $\varphi_\mathrm{d}$. From \cref{lemma:diagonal sector} we see that $V_\mathrm{d}^{q-1}$ for $q-1=1$ decomposes into two  spaces carrying irreducible subrepresentations of $\varphi_\mathrm{d}$ and for $q-1\geq 2$ decomposes into three spaces. We shall label these $V_0^{q-1}, V_1^{q-1}$ and $V_{2}^{q-1}$ with the tacit understanding that $V_{2}^{q-1} = 0$ for $q-1=1$. Now note that we have for all $\hat{C}\in\C_{q-1}$  and all $\sigma\in \bs{\sigma}_{q-1}$ that
\begin{equation}
\varphi\tn{2}(\hat{\theta}(\hat{C}))\theta(\sigma\sigma)  = S_{\sigma_0\hat{C}\sigma\hat{C}\ct, \sigma_Z\hat{C}\sigma\hat{C}\ct} = \theta(\varphi_\mathrm{d}(\hat{C})\sigma\sigma)
\end{equation}
which implies that the representations $\varphi_\mathrm{d}$ and the subrepresentation of $\varphi\tn{2}$ carried by $\Vsc^\tau$ restricted to the image of $\hat{\theta}$ are equivalent with the equivalence given by the map $\theta$. This means that the subspace $\Vsc^\tau$ (with $\tau=\sigma_Z\sigma_0$) decomposes into three subspaces carrying irreducible subrepresentations of $\varphi\tn{2}$ restricted to $\C^\tau_q$. We label these three spaces as
\begin{equation}
V_{[0]}^\tau:= \theta(V_0^{q-1}),\;\;\;\;\;V_{[1]}^\tau:= \theta(V_1^{q-1}),\;\;\;\;\;V_{[2]}^\tau:= \theta(V_2^{q-1}),
\end{equation}
with $\tau=\sigma_Z\sigma_0$. From \cref{lemma:space reduction} and identifying the spaces $\oplus_{\tau'\in \bsq}V_{[0]}^{\tau'}$ and $\Vadjc$ we now arrive at the lemma statement.
\end{proof}

Finally we analyze the symmetric anti-commuting sector, i.e the space $\Vsa$. This space carries an irreducible subrepresentation for $q=1$ and falls apart into two subspaces carrying irreducible subrepresentations for $q\geq2$. We have the following lemma.
\begin{lemma}[Symmetric anti-commuting sector]\label{lemma:symmetric anti-commuting sector}
Take the space $\Vsa$ as defined in \cref{spaces} and define the subspaces
\begin{align}
V_{\{1\}} := \bigoplus_{\tau\in \bsq} V_{\{1\}}^\tau,\;\; \;\;\;\;\;V_{\{2\}} := \bigoplus_{\tau\in \bsq} V_{\{2\}}^\tau,
\end{align}
where for all $\tau\in \bsq$:
\begin{align}
V_{\{1\}}^\tau &:=\vsp\left\{v^\tau \in \Vsa\;\;\|\;\; v^\tau = \sum_{\sigma\in \bs{N}_\tau}\lambda_\sigma S_{\sigma,i\sigma\cdot \tau},\;\; \sum_{\sigma\in \bs{N}_\tau\cap \bs{C}_\nu}\!\!\!\!\lambda_\sigma  - \sum_{\sigma\in \bs{N}_\tau\cap \bs{N}_\nu}\!\!\!\! \lambda_\sigma =\frac{d}{2}\lambda_\nu,\;\;\; \forall \nu\in \bs{N}_\tau \right\},\\
V_{\{2\}}^\tau&:= \vsp\left\{v^\tau \in \Vsa\;\;\|\;\; v^\tau = \sum_{\sigma\in \bs{N}_\tau}\lambda_\sigma S_{\sigma,i\sigma\cdot \tau},\;\; \sum_{\sigma\in \bs{N}_\tau\cap \bs{C}_\nu}\!\!\!\!\lambda_\sigma  - \sum_{\sigma\in \bs{N}_\tau\cap \bs{N}_\nu}\!\!\!\! \lambda_\sigma = -\frac{d}{2}2\lambda_\nu,\;\;\; \forall \nu\in \bs{N}_\tau\right\}.
\end{align}
We have the following statements:
\begin{itemize}
	\item For $q=1$ the space $\Vsa$ carries an irreducible subrepresentation of $\varphi\tn{2}$
	\item For $q\geq 2$ we have $\Vsa = V_{\{1\}}\oplus V_{\{2\}}$ and the spaces $V_{\{1\}}$ and $V_{\{2\}}$ carry subrepresentations of $\varphi\tn{2}$
\end{itemize}
\end{lemma}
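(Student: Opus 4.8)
The plan is to mirror the treatment of the diagonal sector in \cref{lemma:diagonal sector}: exhibit a single linear operator that commutes with $\varphi\tn{2}$ on $\Vsa$, show that it satisfies a quadratic polynomial, and read off its (at most two) eigenspaces as the desired subrepresentations. Concretely, I would work with the orthogonal basis $\{S_{\sigma,i\sigma\cdot\tau}\;\|\;\sigma\in\bs{N}_\tau,\ \tau\in\bsq\}$ of $\Vsa$ and define $\bc{T}:\Vsa\to\Vsa$ by
\[
\bc{T}(S_{\nu,i\nu\cdot\tau})=\sum_{\sigma\in\bs{N}_\tau\cap\bs{C}_\nu}S_{\sigma,i\sigma\cdot\tau}-\sum_{\sigma\in\bs{N}_\tau\cap\bs{N}_\nu}S_{\sigma,i\sigma\cdot\tau}.
\]
Because the Clifford action preserves commutation and anticommutation of Pauli matrices (as recorded in the proof of \cref{lemma:subrepresentations}) and sends the block indexed by $\tau$ to the block indexed by $C\tau C\ct$, the operator $\bc{T}$ is block diagonal with respect to the decomposition $\Vsa=\bigoplus_\tau\Vsa^\tau$, $\Vsa^\tau:=\vsp\{S_{\sigma,i\sigma\cdot\tau}\;\|\;\sigma\in\bs{N}_\tau\}$, and commutes with $\varphi\tn{2}$. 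The eigenvalue equation $\bc{T}v^\tau=c\,v^\tau$ for $v^\tau=\sum_\sigma\lambda_\sigma S_{\sigma,i\sigma\cdot\tau}$ then matches the linear constraint defining $V_{\{1\}}^\tau$ and $V_{\{2\}}^\tau$, with the eigenvalue playing the role of the constant on the right-hand side, so identifying these spaces amounts to computing the spectrum of $\bc{T}$.

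The core of the argument is to show that $\bc{T}$ has exactly two eigenvalues, equivalently that it satisfies a quadratic minimal polynomial. I would establish this by computing $\bc{T}^2$ in the given basis. Its matrix elements reduce to signed counts of triple intersections,
\[
|\bs{N}_\tau\cap\bs{C}_\mu\cap\bs{C}_\nu|-|\bs{N}_\tau\cap\bs{C}_\mu\cap\bs{N}_\nu|-|\bs{N}_\tau\cap\bs{N}_\mu\cap\bs{C}_\nu|+|\bs{N}_\tau\cap\bs{N}_\mu\cap\bs{N}_\nu|,
\]
exactly the quantities already controlled in the proof of \cref{lemma:antisymmetric sector}. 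Using the collapse of the commutation/anticommutation pattern of $\mu,\nu$ into a single Pauli $\eta\propto\mu\cdot\nu$, together with \cref{lemma:set sizes}, these counts evaluate to a linear combination of $\id$ and $\bc{T}$, giving a relation $\bc{T}^2=\alpha\,\id+\beta\,\bc{T}$ with constants depending only on $d$. The two roots of $x^2-\beta x-\alpha$ are the eigenvalues, and the associated eigenspaces are precisely $V_{\{1\}}$ and $V_{\{2\}}$. Since eigenspaces of an operator commuting with $\varphi\tn{2}$ are automatically invariant, both carry subrepresentations and they span $\Vsa$; note that this is all that is claimed here, irreducibility being deferred to \cref{theorem:two-copy representation}.

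I would dispose of $q=1$ first and separately: there every block $\Vsa^\tau$ is one-dimensional, $\bc{T}$ acts as a scalar, only one eigenvalue occurs, so one eigenspace is empty and $\Vsa$ equals the remaining three-dimensional one, whose irreducibility follows from a character computation or dimension count. As an organizational alternative for $q\geq2$, one may instead invoke \cref{lemma:space reduction} to reduce the spectral analysis to a single conveniently chosen block $\Vsa^\tau$ (e.g.\ $\tau=\sigma_Z\sigma_0$) and then lift the two eigenspaces found there to the global spaces $V_{\{1\}},V_{\{2\}}$.

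The main obstacle is the computation of $\bc{T}^2$. The delicate point is the sign bookkeeping forced by the symmetric vectors $S_{\sigma,i\sigma\cdot\tau}$: the map $\sigma\mapsto i\sigma\cdot\tau$ pairing the two Pauli labels squares to $-\id$ rather than $\id$, so each basis vector appears twice (with opposite signs) in the sum over $\bs{N}_\tau$, and the contributions of a label and its partner must be combined with care. Getting the resulting factors of two and the diagonal terms right, so that the triple-intersection counts genuinely assemble into a clean quadratic relation rather than a generic symmetric matrix, is where the real work lies; everything else follows the established template of \cref{lemma:diagonal sector}.
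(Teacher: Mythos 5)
Your proposal is correct and takes essentially the same route as the paper: the paper's proof defines exactly your operator (denoted $\bc{A}$ there) on the single block $\Vsa^\tau$ with $\tau=\sigma_Z\sigma_0$, resolves the $\sigma\mapsto i\sigma\cdot\tau$ sign ambiguity by choosing the $\sigma_X\sigma$ labels, shows the operator commutes with the restricted action of $\C_q^\tau$, computes its trace and square via \cref{lemma:set sizes}, reads off the two eigenspaces and their dimensions, and lifts them to all of $\Vsa$ by \cref{lemma:space reduction} --- precisely the organizational variant you offer at the end, your global $\bc{T}$ being the block-diagonal extension of the paper's $\bc{A}$. One detail worth noting: your expectation of a genuine quadratic relation $\bc{T}^2=\alpha\,\id+\beta\,\bc{T}$ (rather than $\bc{T}^2\propto\id$) is exactly right, since in the paper's simplification of $\bc{A}$ the diagonal label $\sigma'=\sigma$ is included in the sum over $\bs{\hat{C}}_\sigma$ even though $\sigma_X\sigma\notin\bs{C}_{\sigma_X\sigma}$, so the operator actually squared there is $\bc{A}+\id$; this harmless shift changes the eigenvalues but not the eigenspaces, and it confirms that the sign and diagonal-term bookkeeping you flag is indeed the only real subtlety.
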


\begin{proof}
The $q=1$ case was dealt with in~\cite{Wallman2014}, we will deal with the case of $q\geq 2$.
The argument goes by a combination of the arguments in \cref{lemma:diagonal sector} and \cref{lemma:symmetric commuting sector}. First note that we can write $\Vsa$ as 
\begin{equation}
\Vsa = \bigoplus_{\tau\in \bsq} \Vsa^\tau,\;\;\;\;\; \Vsa^\tau = \vsp\{S_{\sigma, i\sigma\cdot \tau}\;\;\|\;\;\sigma\in \bs{N}_\tau\}.
\end{equation}
We can again use \cref{lemma:space reduction} to look for subspaces of $\Vsa$ carrying subrepresentations of $\varphi\tn{2}$ by considering the action of the strict subgroup $\C_q^\tau$ of $\C_q$ on the space $\Vsa^\tau$ (where $\C_q^\tau$ is defined as in \cref{lemma:space reduction}). As in \cref{lemma:symmetric commuting sector} we choose $\tau = \sigma_Z \sigma_0$. The elements of $\bsq$ that anti-commute with $\tau$ can now be seen to be
\begin{equation}
\bs{N}_{\tau} = \{\;\sigma_X\sigma, \sigma_Y\sigma\;\;\|\;\;\sigma\in \bs{\hat{\sigma}}_{q-1}\},
\end{equation}
where $\sigma_X,\sigma_Y,\sigma_Z$ are again the normalized single qubit Pauli operators $X/\sqrt{2},Y/\sqrt{2},Z/\sqrt{2}$.
Note that the set $\bs{N}_{\tau}$ leads to an ambiguous definition of a basis for $\Vsa^\tau$ as we have that
\begin{equation}\label{eq:basis_sym_anti}
S_{\sigma_X\sigma,i (\sigma_X\sigma) \cdot (\sigma_Z\sigma_0)} = - S_{\sigma_Y\sigma,i (\sigma_Y\sigma) \cdot (\sigma_Z\sigma_0)}.
\end{equation}
for all $\sigma \in \hat{\bs{\sigma}}_{q-1}$ (recall that $\hat{\bs{\sigma}}_{q-1} = \bs{\sigma}_{q-1}\cup\{\sigma_0\}$).
We resolve this ambiguity by choosing the set $\{\sigma_X\sigma\;\;|\;\;\sigma\in \hat{\bs{\sigma}}_{q-1}\}$ to generate a basis of $\Vsa^\tau$. This makes that
\begin{equation}
\Vsa^\tau = \vsp \{S_{\sigma_X\sigma, \sigma_Y\sigma}\;\;\|\;\; \sigma\in \hat{\bs{\sigma}}_{q-1}\}.
\end{equation}
In the spirit of \cref{lemma:diagonal sector} we define the following linear map $\bc{A}$ as a linear extension of the action on the basis of $\Vsa^\tau$ as
\begin{equation}
\bc{A}\left(S_{\sigma_X\sigma, i(\sigma_X\sigma)\cdot \tau}\right) = \sum_{\hat{\sigma}\in \bs{C}_{\sigma_X\sigma}\cap \bs{N}_\tau} S_{\hat{\sigma},i\hat{\sigma}\cdot\tau} - \sum_{\hat{\sigma}\in \bs{N}_{\sigma_X\sigma} \cap \bs{N}_\tau} S_{\hat{\sigma},i\hat{\sigma}\cdot\tau}.
\end{equation}
We can argue that this map commutes with the action of $\varphi\tn{2}$ restricted to $\C_q^\tau$ (Where $\C_q^\tau$ is defined as in \cref{lemma:space reduction} with $\tau =\sigma_Z\sigma_0$) by direct calculation. We have for $C\in \C_q^\tau $ and $\sigma \in \bs{\hat{\sigma}}_{q-1}$:
\begin{align}
\bc{A}\left[\varphi\tn{2}(C)\left(S_{\sigma_X\sigma,i(\sigma_X\sigma)\cdot\tau}\right)\right] &= \bc{A}\left[S_{C(\sigma_X\sigma) C\ct,iC(\sigma_X\sigma)\cdot(\tau) C\ct}\right]\\
&=\sum_{\hat{\sigma}\in \bs{C}_{C(\sigma_X\sigma)C\ct}\cap \bs{N}_\tau} S_{\hat{\sigma},i\hat{\sigma}\cdot C\tau C\ct} - \sum_{\hat{\sigma}\in \bs{N}_{C(\sigma_X\sigma)C\ct} \cap \bs{N}_\tau} S_{\hat{\sigma},i\hat{\sigma}\cdot C\tau C\ct}\\
&=\sum_{C\ct\hat{\sigma}C\in \bs{C}_{\sigma_X\sigma}\cap \bs{N}_\tau} S_{\hat{\sigma},i\hat{\sigma}\cdot C\tau C\ct} - \sum_{C\ct\hat{\sigma}C\in \bs{N}_{\sigma_X\sigma} \cap \bs{N}_\tau} S_{\hat{\sigma},i\hat{\sigma}\cdot C\tau C\ct}\\
&=\sum_{\hat{\sigma}\in \bs{C}_{\sigma_X\sigma}\cap \bs{N}_\tau} S_{C\hat{\sigma}C\ct,iC\hat{\sigma}\cdot \tau C\ct} - \sum_{\hat{\sigma}\in \bs{N}_{\sigma_X\sigma} \cap \bs{N}_\tau} S_{C\hat{\sigma}C\ct,iC\hat{\sigma}\cdot \tau C\ct}\\
&= \varphi\tn{2}(C)\left(\bc{A}\left[S_{\sigma_X\sigma,i(\sigma_X\sigma)\cdot\tau}\right]\right).
\end{align}
This means that, through Schur's lemma the map $\bc{A}$ tells us something about the subrepresentations of $\varphi\tn{2}$ restricted to $\C_q^{\tau}$ carried by $\Vsa^\tau$. Because $\tau = \sigma_Z \sigma_0$ we can write $\bc{A}$ in a slightly better form by noting
\begin{align}
\bc{A}\left(S_{\sigma_X\sigma, i(\sigma_X\sigma)\cdot \tau}\right) &= \sum_{\hat{\sigma}\in \bs{C}_{\sigma_X\sigma}\cap \bs{N}_\tau} S_{\hat{\sigma},i\hat{\sigma}\cdot\tau} - \sum_{\hat{\sigma}\in \bs{N}_{\sigma_X\sigma} \cap \bs{N}_\tau} S_{\hat{\sigma},i\hat{\sigma}\cdot\tau}\\
&= \left[\sum_{\sigma'\in \bs{\hat{C}}_{\sigma}} S_{\sigma_X\sigma',\sigma_Y\sigma'} +  \sum_{\sigma'\in \bs{N}_{\sigma}} S_{\sigma_Y\sigma',-\sigma_X\sigma'}\right]  \notag\\&\hspace{30mm}- \left[\sum_{\sigma'\in \bs{N}_{\sigma}} S_{\sigma_X\sigma',\sigma_Y\sigma'} +  \sum_{\sigma'\in \bs{\hat{C}}_{\sigma}} S_{\sigma_Y\sigma',-\sigma_X\sigma'}\right]\\
&= 2 \left[\sum_{\sigma'\in \bs{\hat{C}}_{\sigma}} S_{\sigma_X\sigma',\sigma_Y\sigma'} -  \sum_{\sigma'\in \bs{N}_{\sigma}} S_{\sigma_X\sigma',\sigma_Y\sigma'}\right],
\end{align}
where we recall $\bs{\hat{C}}_\sigma$ to be $\bs{\hat{C}}_\sigma =\bs{C}_\sigma\cup \{\sigma_0,\sigma\}$.
We now analyze the properties of the map $\bc{A}$ by calculating $\tr(\bc{A})$ and $\bc{A}^2$. We have
\begin{align}
\frac{1}{2}\tr(\bc{A}) &=\frac{1}{2}\sum_{\sigma\in \bs{\hat{\sigma}}_{q-1}}\inp{S_{\sigma_X\sigma, \sigma_Y\sigma}}{\bc{A}\left(S_{\sigma_X\sigma, \sigma_Y\sigma}\right)} \\
&=\sum_{\sigma\in \bs{\hat{\sigma}}_{q-1}} \left[\sum_{\hat{\sigma}\in \bs{\hat{C}}_\sigma} \delta_{\sigma,\hat{\sigma}} - \sum_{\hat{\sigma}\in \bs{N}_\sigma} \delta_{\sigma,\hat{\sigma}}\right]\\
&=|\bs{\hat{\sigma}}_{q-1}| = \left(\frac{d}{2}\right)^2.
\end{align}
We can calculate $\bc{A}^2$ entry-wise. We abuse notation a little bit by denoting the entries of $\bc{A}^2$ as $[\bc{A}^2]_{\sigma,\hat{\sigma}}$ with $\sigma,\hat{\sigma} \in\bs{\hat{\sigma}}_{q-1}$ (this set has a one-to-one correspondence with the basis of $\Vsa$ in in \cref{eq:basis_sym_anti}). We calculate:
\begin{align}
\frac{1}{4}[\bc{A}^2]_{\sigma,\hat{\sigma}} &= \frac{1}{4}\inp{S_{\sigma_X \sigma,\sigma_Y \sigma}}{\bc{A}^2\left[S_{\sigma_X\hat{\sigma},\sigma_Y\hat{\sigma}}\right]} \\
&= \sum_{\substack{\sigma''\in \bs{\hat{C}}_{\sigma'}\\\sigma'\in \bs{\hat{C}}_\sigma}}\delta_{\sigma'',\hat{\sigma}} - \sum_{\substack{\sigma''\in \bs{N}_{\sigma'}\\\sigma'\in \bs{\hat{C}}_\sigma}}\delta_{\sigma'',\hat{\sigma}} - \sum_{\substack{\sigma''\in \bs{\hat{C}}_{\sigma'}\\\sigma'\in \bs{N}_\sigma}}\delta_{\sigma'',\hat{\sigma}} + \sum_{\substack{\sigma''\in \bs{N}_{\sigma'}\\\sigma'\in \bs{N}_\sigma}}\delta_{\sigma'',\hat{\sigma}}\\
&= |\bs{\hat{C}}_\sigma \cap \bs{\hat{C}}_{\hat{\sigma}}| - |\bs{\hat{C}}_\sigma \cap \bs{N}_{\hat{\sigma}}|- |\bs{N}_\sigma \cap \bs{\hat{C}}_{\hat{\sigma}}| + |\bs{N}_\sigma \cap \bs{N}_{\hat{\sigma}}|\\
&= \delta_{\sigma, \hat{\sigma}}|\bs{\hat{\sigma}}_{q-1}| = \delta_{\sigma, \hat{\sigma}}\left(\frac{d}{2}\right)^2,
\end{align}
where the last equality follows directly from \cref{lemma:set sizes}.
We see that $\bc{A}^2$ is proportional to the identity. This means that the eigenvalues of $\bc{A}$ must be $\pm d$. Since $\bc{A}$ is not proportional to the identity this means that both eigenvalues must be associated with non-trivial eigenspaces. Schur's lemma thus implies that $\Vsa^\tau$ carries a reducible subrepresentation of $\varphi\tn{2}$ restricted to $\C_q^\tau$ and moreover that the eigenspaces of $\bc{A}$ must be subrepresentations. We will call the spaces carrying these these subrepresentations $V^\tau_{\{1\}}$ and $V^\tau_{\{2\}}$ where we identify $V^\tau_{\{1\}}$ with the $d$ eigenvalue of $\bc{A}$ and $V^\tau_{\{2\}}$ with the $-d$ eigenvalue of $\bc{A}$. We can find out the dimensions of these spaces by noting that
\begin{align}
\tr(\bc{A})= d|V^\tau_{\{1\}}| - d|V^\tau_{\{2\}}| &= \frac{d^2}{2},\\
|V^\tau_{\{1\}}|+ |V^\tau_{\{2\}}| &=\left(\frac{d}{2}\right)^2.
\end{align}
Solving these equations yields
\begin{equation}
|V^\tau_{\{1\}}| = \frac{d}{4}\left(\frac{d}{2}+1\right),\;\;\;\;\;\;\; |V^\tau_{\{2\}}| = \frac{d}{4}\left(\frac{d}{2}-1\right).
\end{equation}
Diagonalizing $\bc{A}$ then yields the equations given in the lemma statement for $V^\tau_{\{1\}}$ and $V^\tau_{\{2\}}$ and by \cref{lemma:space reduction} we also get that $V_{\{1\}}$ and $V_{\{2\}}$ as defined in the lemma statement carry subrepresentations of the subrepresentation carried by $\Vsa$. 
\end{proof}
Note that we have not argued that the spaces $V_{\{1\}}, V_{\{2\}}$ carry irreducible subrepresentations. We will get the irreducibility for free in the full decomposition theorem, which we will deal with now. Using \cref{lemma:adjoint representations,lemma:antisymmetric sector,lemma:symmetric commuting sector,lemma:symmetric anti-commuting sector,lemma:diagonal sector} we can prove the main result of this paper: a decomposition of the two-copy representation $\varphi\tn{2}$ of the Clifford group $\C_q$ valid for any number of qubits $q$. We have:

\begin{theorem}[Decomposition of the two-copy representation]\label{theorem:two-copy representation}
The decomposition of the vector space $\mc{M}_d\tn{2} = \vsp\{\bc{B}\}$ into subspaces carrying 
irreducible subrepresentations of $\C_q$ in $\varphi\tn{2}$ for different 
values of $q$ is:
\begin{align*}
&\vspace{8mm}V_{\mathrm{id}}\oplus V_{\mathrm{r}}\oplus V_{\mathrm{l}} \oplus V_{0} \oplus V_{1} \oplus 
\Vsa\oplus \Vasa, \tag{$q=1$} \\
&\vspace{8mm}V_{\mathrm{id}}\oplus V_{{\mathrm{r}}}\oplus V_{\mathrm{l}}\oplus V_{0}\oplus V_{1} \oplus 
V_{2}\oplus \Vadjc\oplus V_{[1]} \oplus \Vadja\oplus V_{\{1\}}\oplus V_{\{2\}} \oplus 
\Vasc \oplus \Vadjap \tag{$q=2$}, \\
&\vspace{8mm}V_{\mathrm{id}}\oplus V_{\mathrm{r}}\oplus V_{\mathrm{l}}\oplus V_{0}\oplus V_{1} \oplus 
V_{2}\oplus \Vadjc\oplus V_{[1]}\oplus V_{[2]} \oplus \Vadja\oplus V_{\{1\}}\oplus V_{\{2\}} \oplus  \Vasc \oplus \Vadjap \tag{$q \geq 3$},
\end{align*}
where all spaces are as defined in \cref{spaces,lemma:adjoint representations,lemma:antisymmetric sector,lemma:symmetric commuting sector,lemma:symmetric anti-commuting sector,lemma:diagonal sector} and are gathered in \cref{table:irreducible subspaces} in the appendix. 
\end{theorem}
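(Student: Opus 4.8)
The plan is to first assemble a global orthogonal decomposition of $\mc{M}_d\tn{2}$ from the sector lemmas, and then to invoke the character bound of \cref{lemma:upper bound} to promote the remaining subrepresentations to irreducible ones at no extra cost. The starting point is the coarse decomposition
\[
\mc{M}_d\tn{2} = V_{\mathrm{id}}\oplus V_\mathrm{r}\oplus V_\mathrm{l}\oplus V_\mathrm{d}\oplus \Vsc\oplus\Vsa\oplus\Vasc\oplus\Vasa,
\]
which holds because these eight spaces are spanned by disjoint subsets of the orthonormal basis $\bc{B}$ (after recombining $\sigma\tau$ and $\tau\sigma$ into the vectors $S_{\sigma,\tau}$ and $A_{\sigma,\tau}$) and each carries a subrepresentation by \cref{lemma:subrepresentations}. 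Refining with \cref{lemma:diagonal sector} ($V_\mathrm{d}=V_0\oplus V_1\oplus V_2$), \cref{lemma:symmetric commuting sector} ($\Vsc=\Vadjc\oplus V_{[1]}\oplus V_{[2]}$), \cref{lemma:symmetric anti-commuting sector} ($\Vsa=V_{\{1\}}\oplus V_{\{2\}}$), and \cref{lemma:adjoint representations} together with \cref{lemma:antisymmetric sector} ($\Vasa=\Vadja\oplus\Vadjap$), and then substituting the spaces that vanish or collapse for small $q$, reproduces the three displayed decompositions. The direct-sum statement is thereby reduced to proving that every listed summand is irreducible.

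I would next sort the summands according to what the earlier lemmas already supply. The spaces $V_{\mathrm{id}}$, $V_0$, $V_1$, $V_2$, $V_\mathrm{r}$, $V_\mathrm{l}$, $\Vadjc$, $\Vadja$, $V_{[1]}$ and $V_{[2]}$ are already known to be irreducible (from \cref{lemma:diagonal sector,lemma:adjoint representations,lemma:symmetric commuting sector} and the triviality of $V_{\mathrm{id}}$), whereas $\Vasc$, $\Vadjap$, $V_{\{1\}}$ and $V_{\{2\}}$ are so far known only to carry subrepresentations. I would also record the equivalences to be used below: $V_{\mathrm{id}}\cong V_0$ (both trivial), the fourfold adjoint equivalence $V_\mathrm{r}\cong V_\mathrm{l}\cong\Vadjc\cong\Vadja$ from \cref{lemma:adjoint representations}, and $\Vasc\cong\Vadjap$ from \cref{lemma:antisymmetric sector}.

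The core step is a tight lower bound on $\inp{\chi_{\varphi\tn{2}}}{\chi_{\varphi\tn{2}}}$. Writing $\varphi\tn{2}=\bigoplus_j\rho_j$ over the summands $W_j$ and expanding $\inp{\chi_{\varphi\tn{2}}}{\chi_{\varphi\tn{2}}}=\sum_{j,k}\inp{\chi_{\rho_j}}{\chi_{\rho_k}}$, I would bound the right-hand side from below by keeping (i) all diagonal terms, each at least $1$ by \cref{eq:Schur_ineq} and exactly $1$ for the summands already known irreducible, and (ii) only those off-diagonal terms forced by the recorded equivalences, namely $2$ from the trivial pair, $4\cdot 3=12$ from the adjoint quadruple, and the two cross terms of the $\Vasc$--$\Vadjap$ pair. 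Setting $a:=\inp{\chi_{\Vasc}}{\chi_{\Vasc}}$ and $b_i:=\inp{\chi_{V_{\{i\}}}}{\chi_{V_{\{i\}}}}$, this produces $30\geq 24+4a+b_1+b_2$ for $q\geq 3$ and $29\geq 23+4a+b_1+b_2$ for $q=2$. Since $a,b_1,b_2\geq 1$, in both cases the inequality is saturated, which forces $a=b_1=b_2=1$ and hence the irreducibility of $\Vasc$, $\Vadjap$, $V_{\{1\}}$ and $V_{\{2\}}$; saturation simultaneously certifies that no further equivalences occur among the summands. For $q=1$ every summand is already irreducible by the lemmas (with $\Vsc=\Vasc=0$ and the collapse $\Vasa=\Vadja$), so there the theorem only requires checking that the decomposition is complete and that the multiplicities $2$ (trivial) and $3$ (adjoint) reproduce $\inp{\chi_{\varphi\tn{2}}}{\chi_{\varphi\tn{2}}}=15$.

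I expect the only real difficulty to be bookkeeping rather than conceptual: one must keep precise track of which subspaces vanish or merge in the low-$q$ regimes---most delicately the single-qubit collapse $\Vasa=\Vadja$ and $\Vasc=0$, which drops the adjoint multiplicity from $4$ to $3$---so that the guaranteed lower bound exactly meets the value supplied by \cref{lemma:upper bound} and the dropped off-diagonal terms are genuinely zero. Everything else is a direct application of Schur orthogonality.
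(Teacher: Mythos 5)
Your proposal is correct and is essentially the paper's own argument: assemble the decomposition from \cref{lemma:subrepresentations,lemma:diagonal sector,lemma:adjoint representations,lemma:antisymmetric sector,lemma:symmetric commuting sector,lemma:symmetric anti-commuting sector}, record the equivalences (trivial pair, adjoint quadruple, $\Vasc\cong\Vadjap$), and then saturate the character inner product values $30$, $29$, $15$ of \cref{lemma:upper bound} against the Schur lower bound to force irreducibility of $\Vasc$, $\Vadjap$, $V_{\{1\}}$, $V_{\{2\}}$ and the absence of further equivalences. The only differences are organizational (the paper lumps the equivalent summands into a single character $\chi_{\mathrm{sum}}$ with $\inp{\chi_{\mathrm{sum}}}{\chi_{\mathrm{sum}}}\geq 24$ and treats $V_{[1]},V_{[2]}$ as unknowns, while you keep explicit unknowns $a,b_1,b_2$; you also make the completeness of the coarse decomposition and the $q=1$ collapse explicit where the paper defers to~\cite{Wallman2014}), and your arithmetic checks out in all three cases.
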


\begin{proof}
The $q=1$ case is dealt with in \cite{Wallman2014}. We will now deal with the cases $q=2$ and $q\geq 3$.
Beginning with $q\geq3$ note that we have already argued (in \cref{lemma:adjoint representations,lemma:antisymmetric sector,lemma:symmetric commuting sector,lemma:symmetric anti-commuting sector,lemma:diagonal sector}) that all spaces given in \cref{theorem:two-copy representation} are non-trivial and carry subrepresentations of $\varphi\tn{2}$. It remains to argue that these subrepresentations are all irreducible. We will do this using the Schur orthogonality relations (\cref{Schur}) and \cref{lemma:upper bound}. Begin by noting that the representations carried by the spaces $V_{\mathrm{r}},V_{\mathrm{l}},\Vadjc$ and $\Vadja$ are equivalent (\cref{lemma:adjoint representations}), the representations carried by the spaces $\Vasc$ and $\Vadjap$ are equivalent (\cref{lemma:antisymmetric sector}) and the representations carried by  $V_{\mathrm{id}}$ and $V_{0}$ are equivalent (Because they are both the trivial representation). Denote the character of the representations spanned by the direct sum of these representations by $\chi_{\mathrm{sum}}$. By the Schur orthogonality relations \cref{Schur} we have the following relation
\begin{equation}\label{eq:sum_chi}
\inp{\chi_{\mathrm{sum}}}{\chi_{\mathrm{sum}}} \geq 16 + 4 + 4 = 24,
\end{equation}
with equality if and only if all these spaces carry irreducible subrepresentations. Noting that we have yet to include the spaces $V_{1}, V_2, V_{\{1\}},V_{\{2\}},V_{[1]}$ and $V_{[2]}$ we can write the character of $\varphi\tn{2}$ as
\begin{equation}
\inp{\chi_{\varphi\tn{2}}}{\chi_{\varphi\tn{2}}} = \inp{\chi_{\mathrm{sum}} + \chi_{1}+\chi_{2}+ \chi_{\{1\}} +\chi_{\{2\}} + \chi_{[1]} + \chi_{[2]}}{\chi_{\mathrm{sum}} + \chi_{1}+\chi_{2}+ \chi_{\{1\}} +\chi_{\{2\}} + \chi_{[1]} + \chi_{[2]}},
\end{equation}
where $\chi_i$ is the character associated with the subrepresentation carried by the space $V_i$. Combining \cref{eq:sum_chi} and \cref{eq:Schur_ineq} we now conclude that
\begin{equation}
\inp{\chi_{\varphi\tn{2}}}{\chi_{\varphi\tn{2}}} \geq 30.
\end{equation}
From \cref{lemma:upper bound} we note that $\inp{\chi_{\varphi\tn{2}}}{\chi_{\varphi\tn{2}}} =30$ for $q\geq 3$. This means that all spaces mentioned must carry irreducible subrepresentations of $\varphi\tn{2}$ and that the spaces $V_{0}, V_\mathrm{r}, \Vasc,V_{1}, V_2, V_{\{1\}},V_{\{2\}},V_{[1]}$ and $V_{[2]}$ must carry mutually inequivalent irreducible representations. We can make the same argument for $q=2$ noting that the space $V_{[2]}=0$ (and hence does not contribute to the character inner product) and that for $q=2$ we have $\inp{\chi_{\varphi\tn{2}}}{\chi_{\varphi\tn{2}}} = 29$. This completes the classification of the 
irreducible representations of the two-copy representation $\varphi\tn{2}$ of the $q$-qubit Clifford group $\C_q$.
\end{proof}

\section{Conclusion}
We characterized the two-copy representation of the multi-qubit Clifford group 
and identified three distinct cases, namely, the single-qubit [analyzed 
in~\cite{Wallman2014}], two-qubit, and many-qubit cases, which contain $7$, $13$, 
and $14$ irreducible representations respectively. \\

As the Clifford group plays a central role in quantum information, we expect 
the present analysis to have many applications such as state \& channel tomography, analysis of fault-tolerance thresholds, large-deviation bounds \cite{Low2009} and state distinguishability (as analyzed in \cite{Gross2016,Gross2016b,Gross2016c}).
As a concrete example, we have used results from the present paper in a 
companion paper~\cite{Helsen2016} to provide a much sharper analysis of the 
statistical performance of randomized benchmarking~\cite{Knill2008,Magesan2011}. While this result advances understanding of the representation theory of the 
Clifford groups, there remain several open questions about general 
representation theory of multi-qubit Clifford groups. First and foremost, the 
character table of the Clifford group is unknown. Working out this table would greatly assist future studies. This 
paper has identified several distinct irreducible representations, which should 
assist in the construction of the character table. Finally, these results hold for qubits and 
generalizing them to higher-dimensional systems remains an open problem. \\

While writing the current results the authors became aware of an equivalent result due to Zhu, Kueng, Grassl and Gross,~\cite{Gross2016,Gross2016b,Gross2016c} where the fourth tensor power representation of the Clifford group (which is closely related to the two-copy representation) is analyzed using techniques from stabilizer codes and used to construct projective $4$-designs out of the orbits of the Clifford group, analyze POVM norm constants and applied to the problem of phase retrieval.
\subsection*{Acknowledgments}
We would like to thank Le Phuc Thinh, J\'er\'emy Ribeiro, Bas Dirkse and Axel Dahlberg for enlightening discussions and helpful comments. JH and SW 
are funded by STW Netherlands, NWO VIDI and an ERC Starting Grant. This 
research was supported by the U.S. Army Research Office through grant 
W911NF-14-1-0103.
\newpage
\bibliography{CliffordPaper-Final}


\appendix
\setcounter{lemma}{0}
\section{Proof of Lemma 1}
\begin{lemma}
Let $\tau,\tau' \in \bsq$ and $\tau\neq\tau'$. The following equalities hold
\begin{align}\label{induc eq app}
|\bs{N}_\tau \cap \bs{\hat{C}}_{\tau'}| = |\bs{\hat{C}}_\tau \cap \bs{\hat{C}}_{\tau'}|= |\bs{\hat{C}}_\tau \cap \bs{N}_{\tau'}| = |\bs{N}_\tau\cap \bs{N}_{\tau'}| = \frac{d^2}{4}.
\end{align}
Also for all $\tau \in \bsq $ we have
\begin{align}
|\bs{N}_{\sigma_0}\cap \bs{\hat{C}}_\tau| = |\bs{N}_{\sigma_0}\cap \bs{\hat{C}}_\tau| = 0,\\
|\bs{\hat{C}}_{\sigma_0}\cap \bs{\hat{C}}_\tau| = |\bs{\hat{C}}_{\sigma_0}\cap \bs{N}_\tau| = \frac{d^2}{2}.
\end{align}
\end{lemma}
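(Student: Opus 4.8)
The plan is to translate this entirely combinatorial statement into linear algebra over $\mbb{F}_2$ via the standard symplectic representation of the Pauli group. Each Hermitian normalized Pauli $\sigma\in\bsqh$ corresponds to a unique vector $x_\sigma\in\mbb{F}_2^{2q}$ with $\sigma_0\mapsto 0$, and this correspondence is a bijection $\bsqh\to\mbb{F}_2^{2q}$ that restricts to a bijection $\bsq\to\mbb{F}_2^{2q}\setminus\{0\}$. Under this dictionary, two Paulis commute or anti-commute according to the value of the nondegenerate symplectic form $\omega(x_\sigma,x_\tau)\in\mbb{F}_2$: one has $[\sigma,\tau]=0\iff\omega(x_\sigma,x_\tau)=0$ and $\{\sigma,\tau\}=0\iff\omega(x_\sigma,x_\tau)=1$. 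First I would record this dictionary together with the resulting identifications $\bs{\hat{C}}_\tau=\{x:\omega(x,x_\tau)=0\}$ and $\bs{N}_\tau=\{x:\omega(x,x_\tau)=1\}$, the second of which carries no membership constraint beyond the linear equation because $\omega(0,x_\tau)=0$ excludes $x=0$ automatically.

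The core of the argument is then a one-line counting fact. For fixed $\tau$ the map $f_\tau:x\mapsto\omega(x,x_\tau)$ is a linear functional on $\mbb{F}_2^{2q}$, and for $\tau\neq\tau'$ in $\bsq$ the pair $(f_\tau,f_{\tau'})$ is linearly independent: nondegeneracy of $\omega$ gives $f_\tau=0\iff x_\tau=0\iff\tau=\sigma_0$ and $f_\tau=f_{\tau'}\iff x_\tau=x_{\tau'}\iff\tau=\tau'$, both of which are excluded by hypothesis. Consequently the joint map $x\mapsto(f_\tau(x),f_{\tau'}(x))$ is a surjective linear map $\mbb{F}_2^{2q}\to\mbb{F}_2^2$, so by rank-nullity each of its four fibers has cardinality $2^{2q}/4=d^2/4$. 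The four intersections in \cref{induc eq app} are precisely these fibers, indexed by $(a,b)\in\{0,1\}^2$ via $f_\tau=a$, $f_{\tau'}=b$, so all four equal $d^2/4$ simultaneously.

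Finally, the $\sigma_0$ cases are immediate from the definitions and need no counting: since $\sigma_0=\id/\sqrt{d}$ commutes with every Pauli we have $\bs{N}_{\sigma_0}=\emptyset$, giving the two vanishing equalities, while $\bs{\hat{C}}_{\sigma_0}=\bsqh$, so intersecting with it returns $|\bs{\hat{C}}_\tau|$ and $|\bs{N}_\tau|$, both equal to $d^2/2$ by the cardinalities noted before the lemma. The step I expect to require the most care is verifying the commutation-to-symplectic dictionary and the bijectivity of $\sigma\mapsto x_\sigma$ cleanly, including the sign and phase bookkeeping involved in passing between $\hat{\mc{P}}_q$ and $\mbb{F}_2^{2q}$; once that is in place, the linear independence of $f_\tau,f_{\tau'}$ and the fiber count are routine. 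A more self-contained but more tedious alternative would be induction on $q$, splitting each Pauli into its action on the first qubit versus the remaining $q-1$ qubits and tracking how commutation with $\tau$ and $\tau'$ distributes across the factors; I would prefer the symplectic argument since it settles all four equalities at once and avoids this case analysis.
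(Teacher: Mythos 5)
Your proof is correct, but it takes a genuinely different route from the paper's. The paper first uses the complementation relations $\bs{N}_\tau = \bsqh\setminus\bs{\hat{C}}_\tau$ and $|\bs{\hat{C}}_\tau|=|\bs{N}_\tau|=d^2/2$ to set up a small linear system showing that all four intersection sizes are determined by the single quantity $|\bs{N}_\tau\cap\bs{N}_{\tau'}|$, and then computes that quantity by induction on the number of qubits $q$: the base case $q=1$ is checked by hand, and the induction step splits each Pauli as $\tau=\tau_1\otimes\tau_{q-1}$ and decomposes $\bs{N}_\tau$ into the disjoint pieces $\bigl(\bs{N}_{\tau_1}\otimes\bs{\hat{C}}_{\tau_{q-1}}\bigr)\cup\bigl(\bs{\hat{C}}_{\tau_1}\otimes\bs{N}_{\tau_{q-1}}\bigr)$ — exactly the "more tedious alternative" you anticipated and declined. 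Your symplectic argument instead identifies $\bsqh$ with $\mbb{F}_2^{2q}$, observes that for $\tau\neq\tau'$ in $\bsq$ the functionals $f_\tau=\omega(\cdot,x_\tau)$ and $f_{\tau'}=\omega(\cdot,x_{\tau'})$ are nonzero and distinct, hence linearly independent over $\mbb{F}_2$, so the joint map $x\mapsto(f_\tau(x),f_{\tau'}(x))$ is surjective onto $\mbb{F}_2^2$ and all four fibers have size $2^{2q}/4=d^2/4$ by rank–nullity; this handles all four equalities simultaneously with no induction and no case analysis, and your observation that $x=0$ is automatically excluded from the fibers with $f_\tau=1$ correctly reconciles the asymmetry that $\bs{N}_\tau\subset\bsq$ while $\bs{\hat{C}}_\tau\subset\bsqh$. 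What your approach buys is brevity and conceptual clarity, at the cost of importing the standard commutation-to-symplectic dictionary (which the paper never sets up, though its reference on the Clifford group covers it, so you would need to either cite it or verify the phase bookkeeping as you note); what the paper's approach buys is self-containedness — it works entirely inside the tensor-product formalism already established, needing nothing beyond the single-qubit commutation relations. Your handling of the $\sigma_0$ cases ($\bs{N}_{\sigma_0}=\emptyset$, $\bs{\hat{C}}_{\sigma_0}=\bsqh$) matches the paper's exactly.
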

\begin{proof}
Let $\tau, \tau'\in \bsq$ and $\tau \neq \tau'$. We begin by noting that $\bs{N}_\tau$ is the complement of $\bs{\hat{C}}_\tau$ in $\bsqh$ and that $|\bs{\hat{C}}_{\tau}| = |\bs{N}_\tau|= \frac{d^2}{2}$ for all $\tau \in \bsq$. This allows us to make the following statements
\begin{align}
|\bs{\hat{C}}_\tau\cap \bs{\hat{C}}_{\tau'}| + |\bs{N}_\tau\cap \bs{\hat{C}}_{\tau'}| = \frac{d^2}{2}, \hspace{10mm}|\bs{\hat{C}}_\tau\cap \bs{\hat{C}}_{\tau'}| + |\bs{\hat{C}}_\tau\cap \bs{N}_{\tau'}| = \frac{d^2}{2},\\
|\bs{N}_\tau\cap \bs{\hat{C}}_{\tau'}| + |\bs{N}_\tau\cap \bs{N}_{\tau'}| = \frac{d^2}{2},\hspace{10mm}|\bs{\hat{C}}_\tau\cap \bs{N}_{\tau'}| + |\bs{N}_\tau\cap \bs{N}_{\tau'}| = \frac{d^2}{2}.
\end{align}
We can solve this system of equations to obtain
\begin{align}
|\bs{\hat{C}}_\tau\cap \bs{\hat{C}}_{\tau'}| &=|\bs{N}_\tau\cap \bs{N}_{\tau'}|,\label{eq:system1}\\
|\bs{N}_\tau\cap \bs{\hat{C}}_{\tau'}|&=\frac{d^2}{2} - |\bs{N}_\tau\cap \bs{N}_{\tau'}|,\label{eq:system2}\\
|\bs{\hat{C}}_\tau\cap \bs{N}_{\tau'}|&=\frac{d^2}{2} - |\bs{N}_\tau\cap \bs{N}_{\tau'}|.\label{eq:system3}
\end{align}
The rest of the argument will proceed by induction on the number of qubits $q$ (recall that $d = 2^q$).
For $q=1$ we have that
\begin{equation}
|\bs{N}_\tau\cap \bs{N}_{\tau'}| = |\{\tau',i\tau\cdot\tau'\}\cap\{\tau,i\tau\cdot\tau'\}| = |\{i\tau\cdot \tau'\}| = 1 = \frac{2^2}{4}.
\end{equation}
From \cref{eq:system1,eq:system2,eq:system3} we then have that
\begin{equation}
|\bs{N}_\tau \cap \bs{\hat{C}}_{\tau'}| = |\bs{\hat{C}}_\tau \cap \bs{\hat{C}}_{\tau'}|= |\bs{N}_\tau \cap \bs{\hat{C}}_{\tau'}| = |\bs{N}_\tau\cap \bs{N}_{\tau'}| =1.
\end{equation}
Now assume \cref{induc eq app} to hold up to $q-1$. For $\tau\in \bsq$ we can write
\begin{equation}
\bs{N}_\tau = \left(\bs{N}_{\tau_1}\otimes \bs{\hat{C}}_{\tau_{q-1}} \right)\cup \left(\bs{\hat{C}}_{\tau_1}\otimes \bs{N}_{\tau_{q-1}} \right),\;\;\;\;\; \tau_1\in \hat{\bs{\sigma}}_1,\;\; \tau_{q-1} \in \bs{\sigma}_{q-1}, \;\;\; \text{s.t. }\tau_1\otimes \tau_{q-1} = \tau,
\end{equation}
where by $\bs{A}\otimes \bs{B}$ we mean $\bs{A}\otimes \bs{B} := \{a\otimes b \;\|\;a\in \bs{A},\;b\in\bs{B}\}$.
Now we can write
\begin{align}
|\bs{N}_\tau\cap \bs{N}_{\tau'}| &= \bigg|\left[\left(\bs{N}_{\tau_1}\otimes \bs{\hat{C}}_{\tau_{q-1}} \right)\cup \left(\bs{\hat{C}}_{\tau_1}\otimes \bs{N}_{\tau_{q-1}} \right)\right]\cap\left[\left(\bs{N}_{\tau'_1}\otimes \bs{\hat{C}}_{\tau'_{q-1}} \right)\cup \left(\bs{\hat{C}}_{\tau'_1}\otimes \bs{N}_{\tau'_{q-1}} \right)\right]\bigg|\\
&=\bigg|\left(\{\sigma_0\}\otimes(\bs{N}_{\tau_{q-1}}\cap\bs{N}_{\tau'_{q-1}})\right) \cup  \left(\{i\tau_1\cdot\tau'_1\}\otimes(\bs{\hat{C}}_{\tau_{q-1}}\cap\bs{\hat{C}}_{\tau'_{q-1}})\right)\notag\\
&\hspace{25mm}\cup \left(\{\tau_1\}\otimes(\bs{N}_{\tau_{q-1}}\cap \bs{\hat{C}}_{\tau'_{q-1}})\right)\cup \left(\{\tau'_1\}\otimes(\bs{\hat{C}}_{\tau_{q-1}}\cap \bs{N}_{\tau'_{q-1}})\right)\bigg|\\
&= \frac{d^2}{4},
\end{align}
where the last line holds by the induction hypothesis and the fact that all sets in the equation are disjoint. This proves the first half of the lemma. Now take $\tau\in \bsq$ and consider the sets $\bs{N}_{\sigma_0}, \bs{\hat{C}}_{\sigma_0}$. It is trivial to see that $\bs{N}_{\sigma_0} = \emptyset$ and $\bs{\hat{C}}_{\sigma_0} = \bsqh$. Since $|\bs{N}_\tau| = |\bs{\hat{C}}_\tau| = \frac{d^2}{2}$ the second half of the lemma also follows.
\end{proof}

\section{Table of all relevant vector spaces}

\begin{table}[h]
\centering
\ssmall
\def\arraystretch{2.5}
\makebox[\textwidth][c]{
\begin{tabular}{ C   C C  C }
\textbf{space}&		\textbf{definition}&					\textbf{irreducible}&			\textbf{dimension}\\\hline
V_{\mathrm{id}}&					\vsp\{\sigma_0\sigma_0\}&					q\geq1&							1\\
V_{\mathrm{r}}&					\vsp\{\sigma_0\tau\;\;\|\;\;\tau\in\bsq\}&					q\geq1&							d^2-1\\
V_{\mathrm{l}}&					\vsp\{\tau\sigma_0\;\;\|\;\;\tau\in\bsq\}&					q\geq1&							d^2-1\\
V_{\mathrm{d}}&					\vsp\{\tau\tau\;\;\|\;\;\tau\in\bsq\}&					\text{no}&						d^2-1\\
\Vsc&			\vsp\Bigl\{S_{\sigma,\tau}\;\; \|\;\; \sigma\in \bs{C}_\tau,\;\; \tau \in \bsq\Bigr\}&					\text{no}&						\frac{d^2-1}{2}\left(\frac{d^2}{2}-2 \right)\\
\Vsa&			\vsp\Bigl\{S_{\sigma,\tau}\;\; \|\;\; \sigma\in \bs{N}_\tau,\;\; \tau \in \bsq\Bigr\}&					q=1&							\frac{d^2-1}{2}\left(\frac{d^2}{2}\right)\\
\Vasc&			\vsp\Bigl\{A_{\sigma,\tau}\;\; \|\;\; \sigma\in \bs{C}_\tau,\;\; \tau \in \bsq\Bigr\}&					q\geq2&							\frac{d^2-1}{2}\left(\frac{d^2}{2}-2 \right)\\
\Vasa&			\vsp\Bigl\{A_{\sigma,\tau}\;\; \|\;\; \sigma\in \bs{N}_\tau,\;\; \tau \in \bsq\Bigr\}&					q=1&							\frac{d^2-1}{2}\left(\frac{d^2}{2} \right)\\
V_{0}&			\vsp\left\{w \in V_\mathrm{d}\;\;\|\;\; w = \frac{1}{\sqrt{d^2-1}}\sum_{\sigma\in \bsq} \sigma\sigma\right\}&					q\geq1&							1\\
V_{1}&			\vsp\left\{v \in V_d\;\;\|\;\; v = \sum_{\sigma\in \bsq}\lambda_\sigma \sigma\sigma,\;\; \sum_{\sigma\in \bsq}\lambda_\sigma = 0,\;\; \sum_{\sigma\in \bs{N}_\nu}\lambda_\sigma = -\frac{d}{2}\lambda_\tau,\;\; \tau\in \bsq \right\}&					q\geq1&				\frac{d(d+1)}{2}-1\\
V_{2}&			\vsp\left\{v \in V_d\;\;\|\;\; v = \sum_{\sigma\in\bsq}\lambda_\sigma\sigma\sigma,\;\; \sum_{\sigma\in \bsq}\lambda_\sigma = 0,\;\; \sum_{\sigma\in \bs{N}_\nu}\!\!\lambda_\sigma  = \frac{d}{2}\lambda_\tau,\;\; \tau\in \bsq\right\}&					q\geq2&				\frac{d(d-1)}{2}-1\\
\Vadjc&			\vsp\Bigl\{v^{[\mathrm{adj}]}_\tau\in \Vsc\;\;\|\;\;v^{[\mathrm{adj}]}_\tau= \frac{1}{\sqrt{2|\bs{C}_\tau|} }\sum_{\sigma\in \bs{C}_\tau}S_{\sigma,\sigma\cdot\tau},\;\;\tau\in\bsq\Bigr\}& 					q\geq2&							(d^2\!-\!1)\\
\Vadja&			\vsp\Bigl\{v^{\{\mathrm{adj}\}}_\tau \in \Vasa \;\;\|\;\;v^{\{\mathrm{adj}\}}_\tau= \frac{1}{\sqrt{2|\bs{N}_\tau|}} \sum_{\sigma\in \bs{N}_\tau}A_{\sigma,i\sigma\cdot\tau},\;\;\tau\in\bsq\Bigr\}&					q\geq2&							(d^2\!-\!1)\\
\Vadjap&		\vsp\Bigl\{v^{\{A\}} \in \Vasa\;\;\|\;\; \inp{v^{\{A\}}}{v^{\{\mathrm{adj}\}}} = 0,\;\; \forall \;v^{\{\mathrm{adj}\}} \in \Vadja\Bigr\}&					q\geq2&							(d^2\!-\!1)\left(\frac{d^2}{2}-2 \right)\\
V_{[1]}&		\vsp\left\{v^\tau \in \Vsc\;\;\|\;\; v^\tau = \sum_{\sigma\in  \bs{N}_\tau}\lambda_\sigma \sigma\sigma,\;\; \sum_{\sigma\in \bs{C}_\tau \cap \bs{N}_\nu}\lambda_\sigma = -d\lambda_\nu,\;\;\nu\in \bs{C}_\tau,\;\; \tau\in \bsq \right\}&					q\geq2&							(d^2\!-\!1)\left[\frac{\frac{d}{2}(\frac{d}{2}\!+\!1)}{2}\!-\!1\right]\\
V_{[2]}&		\vsp\left\{v^\tau \in \Vsc\;\;\|\;\; v^\tau = \sum_{\sigma\in \bs{N}_\tau}\lambda_\sigma\sigma\sigma,\;\; \sum_{\sigma\in\bs{C}_\tau\cap  \bs{N}_\nu}\!\!\lambda_\sigma  = d\lambda_\nu,\;\;\nu\in \bs{C}_\tau,\;\; \tau\in \bsq\right\}&					q\geq3&							(d^2\!-\!1)\left[\frac{\frac{d}{2}(\frac{d}{2}\!-\!1)}{2}\!-\!1\right]\\
V_{\{1\}}&		\vsp\left\{v^\tau \in \Vsa\;\;\|\;\; v^\tau = \sum_{\sigma\in \bs{N}_\tau}\lambda_\sigma S_{\sigma,i\sigma\cdot \tau},\;\; \sum_{\sigma\in \bs{N}_\tau\cap \bs{C}_\nu}\!\!\!\!\lambda_\sigma  -\!\!\!\!\! \sum_{\sigma\in \bs{N}_\tau\cap \bs{N}_\nu}\!\!\!\! \lambda_\sigma = \frac{d}{2}\lambda_\nu,\;\;  \nu\in \bs{N}_\tau,\;\; \tau \in \bsq \right\}&	q\geq1&							(d^2\!-\!1)\frac{\frac{d}{2}(\frac{d}{2}\!+\!1)}{2}\\
V_{\{2\}}&		\vsp\left\{v^\tau \in \Vsa\;\;\|\;\; v^\tau = \sum_{\sigma\in \bs{N}_\tau}\lambda_\sigma S_{\sigma,i\sigma\cdot \tau},\;\; \sum_{\sigma\in \bs{N}_\tau\cap \bs{C}_\nu}\!\!\!\!\lambda_\sigma  - \!\!\!\!\!\sum_{\sigma\in \bs{N}_\tau\cap \bs{N}_\nu}\!\!\!\! \lambda_\sigma = -\frac{d}{2}\lambda_\nu,\;\; \nu\in \bs{N}_\tau,\;\;\tau \in \bsq\right\}&					q\geq2&							(d^2\!-\!1)\frac{\frac{d}{2}(\frac{d}{2}\!-\!1)}{2}
\end{tabular}}
$\vspace{1em}$\\
\normalsize
\begin{tikzpicture}[sloped,level distance=1cm]
\node  {$\mc{M}_d\tn{2}$}
	child { 
			node {$V_{\mathrm{d}}$}
				child{
					node {$V_0$}
				}
				child{
					node {$V_1$}
				}
				child{
					node {$V_2$}
				}
		}
	child { 
		node {$V_{\mathrm{id}}$}
	}
	child { 
		node {$V_{\mathrm{r}}$}
	}
	child { 
		node {$\Vsc$}
			child {
					node {$\Vadjc$}
				}
			child {
					node {$V_{[1]}$}
				}
			child {
					node {$V_{[2]}$}
				}
	}
	child { 
		node {$V_{\mathrm{l}}$}
	}
	child { 
		node {$\Vsa$}
			child {
					node {$V_{\{1\}}$}
				}
			child {
					node {$V_{\{2\}}$}
				}
	}
	child { 
		node {$\Vasc$}
	}
	child { 
		node {$\Vasa$}
			child {
					node {$\Vadja$}
				}
			child {
					node {$\Vadjap$}
				}
	};
\end{tikzpicture}
\caption{Table with all subspaces of $\mc{M}_d$ carrying subrepresentations of $\varphi\tn{2}$. Given are the name in the text, the definition, for which values (if any) of $q\in \md{N}$ they carry irreducible subrepresentations of $\varphi\tn{2}$ and their dimension as a function of $d=2^q$. Also given is a tree diagram showing subspace inclusions where every child node is a subspace of its parent nodes. }\label{table:irreducible subspaces}
\end{table}
\end{document}